\documentclass{article}

\usepackage{IEEEtrantools}
\usepackage{color}
\usepackage{verbatim}
\usepackage{amsfonts,amstext,mathrsfs,amsmath,amssymb,amsthm}
\usepackage{hyperref}

\newcommand{\ket}[1]{\left| #1 \right>} 
\newcommand{\bra}[1]{\left< #1 \right|} 
\newcommand{\braket}[2]{\left< #1 \vphantom{#2} \right|
 \left. #2 \vphantom{#1} \right>} 
 
\setlength{\evensidemargin}{0cm}
\setlength{\oddsidemargin}{0cm}
\setlength{\topmargin}{-0.5cm}
\setlength{\textheight}{24.0cm}
\setlength{\textwidth}{16.5cm}

\setlength{\headsep}{0cm}
\setlength{\headheight}{0cm}
\setlength{\marginparwidth}{0cm}

\newtheorem{definition}{Definition}
\newtheorem{theorem}{Theorem}
\newtheorem{proposition}{Proposition}
\newtheorem{lemma}[theorem]{Lemma}
\newtheorem{corollary}[theorem]{Corollary}

\title{Tensor Rank and Strong Quantum Nondeterminism in Multiparty Communication}
\author{
	Marcos Villagra\thanks{\texttt{marcos.villagra@acm.org}, Graduate School of Information Science, Nara Institute of Science and Technology, Nara 630-0192, Japan.}
	\and
	Masaki Nakanishi\thanks{\texttt{m-naka@e.yamagata-u.ac.jp}, Faculty of Education, Art and Science, Yamagata University, Yamagata, Japan.}
	\and
	Shigeru Yamashita\thanks{\texttt{ger@cs.ritsumei.ac.jp}, Department of Computer Science, Ritsumeikan University, Shiga 525-8577, Japan.}
	\and
	Yasuhiko Nakashima\thanks{\texttt{nakashim@is.naist.jp}, Graduate School of Information Science, Nara Institute of Science and Technology, Nara 630-0192, Japan.}
}
\date{}

\begin{document}
\maketitle
\begin{abstract}
In this paper we study quantum nondeterminism in multiparty communication. There are three (possibly) different types of nondeterminism in quantum computation: i) strong, ii) weak with classical proofs, and iii) weak with quantum proofs. Here we focus on the first one. A strong quantum nondeterministic protocol accepts a correct input with positive probability, and rejects an incorrect input with probability 1. In this work we relate strong quantum nondeterministic multiparty communication complexity to the rank of the communication tensor in the Number-On-Forehead and Number-In-Hand models. In particular, by extending  the definition proposed by de Wolf to {\it nondeterministic tensor-rank} ($nrank$), we show that for any boolean function $f$ when there is no prior shared entanglement between the players, 1) in the Number-On-Forehead model, the cost is upper-bounded by the logarithm of $nrank(f)$; 2) in the Number-In-Hand model, the cost is lower-bounded by the logarithm of $nrank(f)$. Furthermore, we show that when the number of players is $o(\log\log n)$ we have that $NQP\nsubseteq BQP$ for Number-On-Forehead communication.
\end{abstract}
\noindent{{\bf Keywords:} communication complexity, multiparty communication, quantum computation, quantum nondeterminism, tensor rank

\section{Introduction}
Nondeterminism plays a fundamental role in complexity theory. For instance,  the $P$ vs $NP$ problem asks if nondeterministic polynomial time is strictly more powerful than deterministic polynomial time. Even though nondeterministic models are unrealistic, they can give insights into the power and limitations of realistic models (i.e., deterministic, random, etc.).

There are two ways of defining a nondeterministic machine, using randomness or as a proof system: a nondeterministic machine {\it i}) accepts a correct input with positive probability, and rejects an incorrect input with probability one;  or {\it ii}) is a deterministic machine that receives besides the input, a proof or certificate which exists if and only if  the input is correct. For classical machines (i.e., machines based on classical mechanics), these two notions of nondeterminism are equivalent. However, in the quantum setting they can be different. In fact, these two notions give rise to (possibly) three different kinds of quantum nondeterminism. In {\it strong quantum nondeterminism}, the quantum machine accepts a correct input with positive probability. In {\it weak quantum nondeterminism}, the quantum machine outputs the correct answer when supplied with a correct proof, which could be either classical or quantum.

The study of quantum nondeterminism in the context of query and communication complexities started with de Wolf \cite{DeWolf2000}. In particular, de Wolf \cite{DeWolf2000,DeWolf2003} introduced the notion of {\it nondeterministic rank} of a matrix, which was proved to completely characterize strong quantum nondeterministic communication. In the same piece of work, it was proved that strong quantum nondeterministic protocols are exponentially stronger than classical nondeterministic protocols. Similarly, Le Gall \cite{LeGall2006} studied  weak quantum nondeterministic communication with classical proofs and showed a quadratic separation for a total function.

Weak nondeterminism seems a more suitable definition, mainly due to the requirement of the existence of a proof, a concept that plays fundamental roles in complexity theory. In contrast, strong nondeterminism lends itself to a natural mathematical description in terms of matrix rank. Moreover, strong nondeterminism is a more powerful model capable of simulating weak nondeterminism with classical and quantum proofs. The reverse, if weak nondeterminism is strictly a less powerful model or not is still an open problem.

The previous results by de Wolf \cite{DeWolf2003} and Le Gall \cite{LeGall2006} were on the context of 2-party communication complexity, i.e., there are two players with two inputs $x,y\in\{0,1\}^n$ each, and they want to compute a function $f(x,y)$. Let $rank(f)$ be the rank of the communication matrix $M_f$, where $M_f[x,y]=f(x,y)$. A known result by \cite{Buhrman2001b} is $\lceil\frac{1}{2}\log rank(f)\rceil \leq Q(f)\leq D(f)$, where $D(f)$ is the deterministic communication complexity of $f$ and $Q(f)$ the quantum exact communication complexity\footnote{All logarithms in this paper are base 2.}. It is conjectured that $D(f)=O(\log^c rank)$ for some arbitrary constant $c$. This is the {\it log-rank conjecture} in communication complexity, one of the biggest open problems in the field. If it holds, it will imply that $Q(f)$ and $D(f)$ are polynomially related. This is in stark contrast to the characterization given by de Wolf \cite{DeWolf2003} in terms of the nondeterministic matrix-rank, which is defined as  the minimal rank of a matrix (over the complex field) whose $(x,y)$-entry is non-zero if and only if $f(x,y)=1$.

\subsection{Contributions}
In this paper, we continue with the study of strong quantum nondeterminism in the context of multiparty protocols. Let $k\geq 2$ be the number of players evaluating a function $f(x_1,\dots,x_k)$ where each $x_i\in\{0,1\}^n$. The players take turns predefined at the beginning of the protocol. Each time a player sends a bit (or qubit if it is a quantum protocol), he sends it to the player who follows next. The computation of the protocol ends when the last player computes $f$. The communication complexity of the protocol is defined as the minimum number of bits that need to be transmitted by the players in order to compute $f(x_1,\dots,x_k)$. There are two common ways of communication: The Number-On-Forehead model (NOF), where player $i$ knows all inputs except $x_i$; and, Number-In-Hand model (NIH), where player $i$ only knows $x_i$. Also, any protocol naturally defines a {\it communication tensor} $T_f$, where $T_f[x_1,\dots,x_k]=f(x_1,\dots,x_k)$.

Tensors are natural generalizations of matrices. They are defined as multi-dimensional arrays while matrices are 2-dimensional arrays. In the same way, the concept of matrix rank extends to {\it tensor rank}. However, the nice properties of matrix rank do not hold  anymore for tensors; for instance, the rank could be different if the same tensor is defined over different fields; see the survey paper by Kolda and Bader \cite{Kolda2009}.

We extend the concept of nondeterministic matrices to {\it nondeterministic tensors}. The {\it nondeterministic tensor rank}, denoted $nrank(f)$, is the minimal rank of a tensor (over the complex field) whose $(x_1,\dots,x_k)$-entry is non-zero if and only if $f(x_1,\dots,x_k)=1$.

Let $NQ_k^{NOF}$ and $NQ_k^{NIH}$ denote the $k$-party strong quantum nondeterministic communication complexity without prior shared entanglement for the NOF and NIH models respectively. 

\begin{theorem}\label{the:bounds}
Let $f:(\{0,1\}^n)^k\to \{0,1\}$, then $NQ_k^{NOF}(f)\leq\lceil \log nrank(f)\rceil+1$,  and $NQ_k^{NIH}(f) \geq \lceil \log nrank(f)\rceil+1$.
\end{theorem}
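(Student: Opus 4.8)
The plan is to prove the two bounds separately, but both should rest on the same idea: converting a rank decomposition of a nondeterministic tensor into a quantum protocol and vice versa. Recall that $nrank(f)$ is the minimal rank $r$ of a tensor $A$ with $A[x_1,\dots,x_k]\neq 0$ iff $f(x_1,\dots,x_k)=1$, so by definition of tensor rank there exist vectors such that
\begin{equation}
A[x_1,\dots,x_k]=\sum_{j=1}^{r}\prod_{i=1}^{k} u_j^{(i)}(x_i),
\end{equation}
where each $u_j^{(i)}$ is a function of the single input $x_i$. The strong-nondeterminism acceptance condition is that the protocol accepts with positive probability exactly when $f=1$, i.e. exactly when this sum is nonzero.

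\medskip

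For the NOF upper bound $NQ_k^{NOF}(f)\leq\lceil\log nrank(f)\rceil+1$, I would exhibit a protocol. The first player does not know $x_1$ but knows all other inputs, the second player knows all but $x_2$, and so on; crucially, in the NOF model the product term $\prod_i u_j^{(i)}(x_i)$ has the feature that any single input $x_i$ is visible to every player except player $i$. The idea is to set up a superposition over the $r$ rank-one terms using $\lceil\log r\rceil$ qubits, have each player $i$ (who sees all inputs but $x_i$) contribute the factors $u_j^{(\ell)}(x_\ell)$ for the coordinates $\ell$ he can see, and arrange the amplitudes so that the final acceptance amplitude is proportional to $A[x_1,\dots,x_k]$. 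Because every coordinate $x_\ell$ is known to at least one player, all $k$ factors of each product can be folded into the amplitude, and the final projection onto an accepting state succeeds with probability proportional to $|A[x_1,\dots,x_k]|^2$, which is positive iff $f=1$. The extra $+1$ qubit carries the final accept/reject answer. I would follow de Wolf's two-party construction closely, generalizing the single amplitude-loading step into $k$ sequential steps.

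\medskip

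For the NIH lower bound $NQ_k^{NIH}(f)\geq\lceil\log nrank(f)\rceil+1$, I would argue in the reverse direction: any strong quantum nondeterministic NIH protocol of cost $c$ yields a tensor of the right support with rank at most roughly $2^{c-1}$. Here the key point is that in NIH each player touches only his own input $x_i$, so the global acceptance amplitude factors as a sum (over the branch/communication histories, of which there are about $2^c$) of products of player-local amplitudes, each depending on a single $x_i$. This is precisely a rank decomposition of a tensor whose entries are the acceptance amplitudes; that tensor has nonzero entry exactly when the accept probability is positive, i.e. exactly when $f=1$, so it is a valid nondeterministic tensor, and its rank of at most $2^{c-1}$ gives $nrank(f)\leq 2^{c-1}$, i.e. $c\geq\lceil\log nrank(f)\rceil+1$.

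\medskip

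The main obstacle I anticipate is the bookkeeping in the NOF upper bound: unlike the NIH case, where input-locality makes the product structure of a rank-one tensor term line up immediately with player-locality, the NOF model has each player missing exactly one coordinate, so I must verify that across the $k$ players every factor $u_j^{(i)}(x_i)$ gets applied exactly once and none is applied by the player who cannot see it. Making the amplitudes interfere correctly to yield acceptance amplitude exactly $\propto A[x_1,\dots,x_k]$ — rather than some other combination of the $u_j^{(i)}$ — is the delicate step, and it is where the $k>2$ generalization genuinely differs from the two-party argument; care is also needed that no shared entanglement is secretly required and that the normalization keeps all amplitudes bounded by one.
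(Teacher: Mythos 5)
Your overall frame (rank decomposition $\leftrightarrow$ protocol) is the right one, but both halves of your plan have a concrete gap, and in each case the missing piece is the actual content of the paper's proof.

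For the NOF upper bound, your ``$k$ sequential amplitude-loading steps'' cannot meet the claimed cost. In this model the channel carries one qubit per round, the cost is the number of rounds, and a player's unitary acts only on $\mathcal{H}_i\otimes\mathcal{C}$; so for player $i$ to multiply the branch amplitudes by his factors $u_j^{(\ell)}(x_\ell)$, the $\lceil\log r\rceil$-qubit branch register must physically be routed to him. If the register must visit all $k$ players in turn, the cost is roughly $(k-1)\lceil\log r\rceil+1$, whereas the claimed bound $\lceil\log nrank(f)\rceil+1$ has no $k$-dependence at all. The observation you are missing is that in NOF \emph{two} players already suffice: player $k$ misses only $x_k$, so he sees the entire block $(x_1,\dots,x_{k/2})$; player $1$ misses only $x_1$, so he sees the entire block $(x_{k/2+1},\dots,x_k)$. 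The paper therefore matrizes the nondeterministic tensor $T$ along this bipartition (any matrization of a rank-$r$ tensor has matrix rank at most $r$), takes the SVD $M=U\Sigma V$, and runs de Wolf's two-party protocol between players $k$ and $1$ alone: player $k$ prepares the at-most-$r$-dimensional state $\propto\Sigma V\ket{x_1,\dots,x_{k/2}}$, compresses it into $\lceil\log r\rceil$ qubits and sends them; player $1$ applies $U$, measures, compares the outcome with $(x_{k/2+1},\dots,x_k)$, and writes the answer bit on the channel, accepting with probability $\propto\left|T[x_1,\dots,x_k]\right|^2$. Odd $k$ is handled by padding $T$ with a dummy $(k+1)$st index (an all-ones vector in each rank-one term), which leaves the rank unchanged.

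For the NIH lower bound, the step ``this is precisely a rank decomposition of a tensor whose entries are the acceptance amplitudes'' is not well-defined, because there is no scalar acceptance amplitude. After projecting the final state onto the channel state $\ket{1}$ you have $\ket{\phi(x)}=\sum_{m\in S}\ket{A_m^1(x_1)}\otimes\cdots\otimes\ket{A_m^k(x_k)}$, a \emph{vector} in the players' arbitrarily large private spaces; $f(x)=1$ iff $\ket{\phi(x)}\neq 0$, and the coordinate witnessing nonzeroness can change with $x$. A nondeterministic tensor must have scalar entries. The obvious scalar surrogate, the acceptance probability $p(x)=\sum_{m,m'\in S}\prod_{i=1}^{k}\braket{A_{m'}^i(x_i)}{A_m^i(x_i)}$, is a sum of $|S|^2$ rank-one terms, so it only yields $nrank(f)\leq 2^{2(\ell-1)}$, i.e.\ a lower bound weaker by a factor of $2$. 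Closing exactly this gap is the paper's Lemma \ref{lem:family-vectors}: tensor the $k$ vector families together into two blocks $\ket{A_m(y)}$, $\ket{B_m(z)}$ (of sizes $\lfloor k/2\rfloor$ and $\lceil k/2\rceil$), then choose random linear functionals $\alpha,\beta$ from a sufficiently large finite set and define the scalar tensor
\begin{equation*}
T[x]=\sum_{m\in S}\Bigl(\sum_u \alpha_u A_m(y)_u\Bigr)\Bigl(\sum_v \beta_v B_m(z)_v\Bigr),
\end{equation*}
which manifestly has rank at most $|S|=2^{\ell-1}$; a union-bound argument (de Wolf's, extended in the paper to blocks of unequal dimension, which arise when $k$ is odd) shows some choice of $\alpha,\beta$ makes $T[x]\neq 0$ simultaneously for every $x\in f^{-1}(1)$. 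Without this contraction step your protocol-to-tensor reduction does not produce a nondeterministic tensor at all, and with the naive probability tensor it loses half the bound.
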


This theorem generalizes previous results by de Wolf \cite{DeWolf2003}. Also, since $NQ_k^{NIH}$ is a lower bound for exact NIH quantum communication\footnote{An exact quantum protocol accepts a correct input and rejects an incorrect input with probability 1.}, denoted $Q_k^{NIH}$, we obtain the following corollary:

\begin{corollary}\label{cor:exact}
$\lceil \log nrank(f)\rceil+1 \leq Q_k^{NIH}(f)$.
\end{corollary}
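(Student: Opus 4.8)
The plan is to derive the corollary by composing the lower bound already established in Theorem~\ref{the:bounds} with an elementary containment between two communication models. Theorem~\ref{the:bounds} supplies $\lceil \log nrank(f)\rceil + 1 \leq NQ_k^{NIH}(f)$ directly, so the only remaining task is to show $NQ_k^{NIH}(f) \leq Q_k^{NIH}(f)$; chaining the two inequalities then yields the claim. In other words, the whole corollary reduces to the observation that strong quantum nondeterministic communication is never more expensive than exact quantum communication in the same (NIH, no prior entanglement) setting.

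The key step is to verify that every exact quantum protocol is itself an admissible strong quantum nondeterministic protocol, so that the cost being minimized for $NQ_k^{NIH}$ ranges over a superset of the protocols minimized for $Q_k^{NIH}$. To confirm this, I would check the two defining conditions of strong nondeterminism against an arbitrary exact protocol for $f$. On an input with $f=1$, an exact protocol accepts with probability $1$, which is in particular positive, meeting the requirement that a correct input be accepted with positive probability. On an input with $f=0$, an exact protocol rejects with probability $1$, which is precisely the requirement that an incorrect input be rejected with probability $1$. Hence any protocol witnessing $Q_k^{NIH}(f)$ is, verbatim, a strong nondeterministic protocol of the same communication cost, giving $NQ_k^{NIH}(f) \leq Q_k^{NIH}(f)$.

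Combining the two bounds produces $\lceil \log nrank(f)\rceil + 1 \leq NQ_k^{NIH}(f) \leq Q_k^{NIH}(f)$, which is exactly the stated inequality. I do not expect any genuine technical obstacle here, since the entire argument is the model-containment observation; the only point demanding minor care is to confirm that both complexities are defined over the same communication scenario (NIH, $k$ players, no prior shared entanglement), so that an exact protocol is literally admissible as a strong nondeterministic one without any modification to the acceptance or rejection conditions.
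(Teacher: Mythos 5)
Your proposal is correct and matches the paper's own reasoning exactly: the paper derives the corollary by noting that, by definition, any exact protocol (accepting with probability $1$ on $f=1$ and rejecting with probability $1$ on $f=0$) is in particular a strong nondeterministic protocol, so $NQ_k^{NIH}(f)\leq Q_k^{NIH}(f)$, and then invokes the NIH lower bound of Theorem~\ref{the:bounds}. Your extra care in checking that both complexities refer to the same setting (NIH, $k$ players, no prior entanglement) is exactly the right point to verify, and nothing further is needed.
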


The proof of Theorem \ref{the:bounds} is given in Section \ref{sec:bound}. Even though it is a generalization of the techniques of \cite{DeWolf2003}, it requires technical insight. The proof does not generalize in an straightforward manner and it does not yield the same characterization as in the 2-player case. For example, $NQ_k^{NOF}$ cannot be lower-bounded in general by the tensor rank. To see this consider the $k$-party equality function $EQ$ given by $EQ_k(x_1,\dots,x_k)=1$ if and only if $x_1=\cdots=x_k$.  A nondeterministic tensor for $EQ_k$ is superdiagonal\footnote{An order-$k$ tensor is \emph{superdiagonal} when $T[x_1,\dots,x_k]\neq 0$ if and only if $x_1=\cdots x_k$.} with non-zero entries in the main diagonal, and 0 anywhere else. Thus, it has $2^n$ rank, and implies by Theorem \ref{the:bounds} that $NQ_k^{NOF}(EQ_k)\leq n+1$ and $NQ_k^{NIH}(EQ_k)\geq n+1$. In particular, the communication complexity of $EQ_k$ is upper-bounded by $\mathcal{O}(n)$ in the NOF model. However, it is easy to show that in the NOF model there exists a classical protocol for $EQ_k$ with a cost of 2 bits\footnote{In the \emph{blackboard model} (explained in Section \ref{sec:preliminaries}) for $k\geq 3$, let the first player check if $x_2,\dots,x_k$ are equal. If they are, he sends a 1 bit to the second player, who will check if $x_1,x_3,\dots,x_k$ are equal. If his strings are equal and he received a 1 bit from the first player, he sends a 1 bit to all players indicating that all strings are equal. In the \emph{message-passing} model the same protocol has a cost of $O(k)$ bits.}. Hence, the characterization for the 2-player case does not extends to the multiplayer case. In contrast, the lower bound on $NQ_k^{NIH}(EQ_k)$ that follows from Theorem \ref{the:bounds} is not that loose; using the trivial protocol, where all players send their inputs, we have that $NQ_k^{NIH}(EQ_k)=\mathcal{O}(kn)$. Thus, Theorem \ref{the:bounds} yields a tight bound for $EQ_k$ whenever $k=O(1)$. However, whether the same phenomenon extends to all functions in the NIH model is unknown. See below in this section for some consequences on constructing tensors with high rank.

A more interesting function is the generalized inner product  $GIP_k(x_1,\dots,x_k)=(\sum_{i=1}^k \bigwedge_{j=1}^n x_{ij}) \mod 2$. We know that $nrank(GIP_k)\geq (k-1)2^{n-1}+1$ (see Section \ref{app:gip-rank} for a proof). Thus, we have the following result.
\begin{proposition}
$NQ_k^{NIH}(GIP_k)\geq n+\lceil \log(k-1)\rceil$.
\end{proposition}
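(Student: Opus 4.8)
The plan is to derive this bound directly by combining the NIH half of Theorem~\ref{the:bounds} with the explicit estimate of $nrank(GIP_k)$ stated just above the proposition. Everything substantive has already been isolated: the communication-complexity content is Theorem~\ref{the:bounds}, and the algebraic content is the inequality $nrank(GIP_k)\geq (k-1)2^{n-1}+1$, whose proof is deferred to Section~\ref{app:gip-rank}. What remains for the proposition itself is to chain these two facts and simplify a ceiling expression.

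Concretely, I would first invoke the lower bound $NQ_k^{NIH}(GIP_k)\geq\lceil\log nrank(GIP_k)\rceil+1$. Next, using the rank estimate together with the monotonicity of $N\mapsto\lceil\log N\rceil$, I replace $nrank(GIP_k)$ by the smaller quantity $(k-1)2^{n-1}+1$, obtaining $NQ_k^{NIH}(GIP_k)\geq\lceil\log\bigl((k-1)2^{n-1}+1\bigr)\rceil+1$. The only remaining step is the ceiling manipulation: since $(k-1)2^{n-1}+1>(k-1)2^{n-1}$, monotonicity again gives $\lceil\log((k-1)2^{n-1}+1)\rceil\geq\lceil\log((k-1)2^{n-1})\rceil$, and because $n-1$ is an integer the ceiling splits as $\lceil\log((k-1)2^{n-1})\rceil=(n-1)+\lceil\log(k-1)\rceil$. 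Adding the trailing $+1$ then produces exactly $n+\lceil\log(k-1)\rceil$, which is the claimed bound.

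I do not expect a genuine obstacle in the proposition as stated, since it is a direct corollary of Theorem~\ref{the:bounds}; the real work lives in the rank lower bound of Section~\ref{app:gip-rank}, where one must argue that every tensor supported exactly on the $1$-entries of $GIP_k$ has rank at least $(k-1)2^{n-1}+1$. If I had to anticipate a subtlety in the corollary, it would be checking that the $+1$ inside the logarithm does not cost one unit when $k-1$ is a power of two; the computation above shows it does not (in that case the estimate is merely loose by one, since $\lceil\log(2^{m}2^{n-1}+1)\rceil=m+n$ while $n+\lceil\log(k-1)\rceil=n+m$), so the stated inequality holds uniformly in $k$.
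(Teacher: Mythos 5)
Your proposal is correct and matches the paper's own (implicit) proof: the paper likewise obtains the proposition by chaining the NIH lower bound of Theorem~\ref{the:bounds} with the estimate $nrank(GIP_k)\geq (k-1)2^{n-1}+1$ proved separately in Section~\ref{app:gip-rank}, and your ceiling manipulation $\lceil\log((k-1)2^{n-1})\rceil+1=(n-1)+\lceil\log(k-1)\rceil+1=n+\lceil\log(k-1)\rceil$ is exactly the required arithmetic. Your side remark about the case where $k-1$ is a power of two is also sound (there the $+1$ inside the logarithm would in fact yield a bound one unit stronger than stated).
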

In NIH, using the trivial protocol, we obtain (with Corollary \ref{cor:exact}) a bound in quantum exact communication of $ n+\lceil \log(k-1)\rceil-1 \leq Q_k^{NIH}(GIP_k) \leq (k-1) n+1$. Improving the lower bound will require new techniques for explicit construction of linear-rank tensors with important consequences to circuit lower bounds; see for example Raz \cite{Raz2010} and the paper by Alexeev, Forbes and Tsimerman\cite{Alexeev2011} for state-of-the-art tensor constructions. In general, we are still unable to upper-bound $NQ_k^{NIH}(f)$ in terms of $\log nrank$. This way we have a new \emph{log-rank conjecture} for strong quantum nondeterministic communication complexity.

Although the bounds given by Theorem \ref{the:bounds} could be loose for some functions, they are good enough  for other applications. For instance, we show in Section \ref{sec:separation} a separation between the NOF models of strong quantum nondeterminism and bounded-error quantum communication. We do so by applying Theorem \ref{the:bounds} to a total function explicitly constructed for this task. This result could be considered as the quantum analog of a separation previously proved in \cite{David2009,Chatt2008,Gavinsky2010} between classical nondetermistic and randomized NOF communication.

\section{Preliminaries}\label{sec:preliminaries}
In this paper we assume basic knowledge of communication complexity and quantum computing. We refer the interested reader to the books by \cite{Kushilevitz1997} and \cite{nielsen00} respectively. In this section we give a small review of tensors and quantum communication.
\subsection{Tensors}
A {\it tensor} is a multi-dimensional array defined over some field. An order-$d$ tensor is an element of the tensor product of $d$ vector spaces. 

\begin{definition}[Simple Tensor]
Let $\ket{v_i}\in V^{n_i}$ be an $n_i$-dimensional vector  for $1\leq i \leq d$ on some vector space $V^{n_i}$. The $j_i^{th}$ component of $\ket{v_i}$ is denoted by $v_i(j_i)$ for $1\leq j_i \leq n_i$. The tensor product of $\{\ket{v_i}\}$ is the tensor $T\in V^{n_1}\otimes\cdots\otimes V^{n_d}$ whose  $(j_1,\dots,j_d)$-entry is $v_1(j_1)\cdots v_d(j_d)$, i.e., $T[j_1,\dots,j_d]=v_1(j_1)\cdots v_d(j_d)$. Then $T=\ket{v_1}\otimes\cdots\otimes \ket{v_d}$ and we say $T$ is a rank-1 or simple order-$d$ tensor. We also say that  a tensor is of high order if $d\geq 3$. 
\end{definition}

From now on, we will refer to high-order tensors simply as tensors, and low-order tensor will be matrices, vectors, and scalars as usual.

It is important to note that the set of simple tensors spans the space $V^{n_1}\otimes\cdots\otimes V^{n_d}$, and hence, there exist tensors that are not simple. This leads to the definition of rank.

\begin{definition}[Tensor Rank]
The rank of a tensor is the minimum $r$ such that $T=\sum_{i=1}^r A_i$ for simple tensors $A_i$.
\end{definition}

This agrees with the definition of matrix rank. The complexity of computing tensor rank was studied by H\aa stad \cite{Hastad1990} who showed that it is $NP$-complete for any finite field, and $NP$-hard for the rational numbers.

The process of arranging the elements of an order-$k$ tensor into a matrix is known as {\it matrization}. Since there are many ways of embedding a tensor into a matrix, in general the permutation of columns is not important, as long as the corresponding operations remain consistent; see Kolda and Bader\cite{Kolda2009}.

\subsection{Strong Quantum Nondeterministic Multiparty Communication}
In a multiparty communication protocol there are $k\geq 3$ players trying to compute a function $f$. Let $f:X^k\to \{0,1\}$ be a function on $k$ strings $x=(x_1,\dots, x_{k})$, where each $x_{i}\in X$ and $X=\{0,1\}^n$. There are two common ways of communication between the players: The Number-In-Hand  (NIH) and the Number-On-Forehead (NOF) models. In NIH, player $i$ only knows $x_{i}$, and in NOF, player $i$ knows all inputs except $x_{i}$. First we review the classical definition.
\begin{definition}[Classical Nondeterministic Protocol]
Let $k$ be the number of players. In order to communicate, the players take turns in an order predefined at the beginning of the protocol. Each player sends exactly one bit to the player that follows next. The computation of the protocol ends when the last player computes $f$. If $f(x)=1$ then, the protocol accepts $x$ with positive probability; if $f(x)=0$, the protocol rejects $x$ with probability 1. The cost of the protocol is the total number of bits communicated.
\end{definition}

Hence, the {\it classical nondeterministic multiparty communication complexity}, denoted $N_k(f)$, is defined as the minimum number of bits required to compute $f(x)$.  If the model is NIH or NOF, we add a superscript $N_k^{NIH}(f)$ or $N_k^{NOF}(f)$ respectively. Note that, the definition of the multiparty protocols in this paper (classical and quantum) are by \emph{message-passing}, i.e., a player sends a bit only to the player that follows next. This is in contrast to the more common {\it blackboard model}. In this latter model, when a player sends a bit, he does so by broadcasting it and reaching all players immediately. Clearly, any lower bound on the blackboard model is a lower bound for the message-passing model in this paper.

To model NOF and NIH in the quantum setting, we follow the work of Lee, Schechtman, and Shraibman \cite{Lee2009}, originally defined by Kerenidis \cite{Kerenidis2009}.
\begin{definition}[Quantum Multiparty Protocol]
Let $k$ be the number of players in the protocol. Define the Hilbert space by $\mathcal{H}_1\otimes\cdots \otimes \mathcal{H}_k\otimes \mathcal{C}$, where each $\mathcal{H}_i$ is the Hilbert space of player $i$, and $\mathcal{C}$ is the one-qubit channel. To communicate the players take turns predefined at the beginning of the protocol. On the turn of player $i$:
\begin{enumerate}
\item in NIH, an arbitrary unitary that only depends on $x_{i}$ is applied on $\mathcal{H}_i\otimes \mathcal{C}$, and acts as the identity anywhere else;
\item in NOF, an arbitrary unitary that depends on all inputs except $x_{i}$ is applied on $\mathcal{H}_i\otimes \mathcal{C}$, and acts as the identity anywhere else.
\end{enumerate}
The cost of the protocol is the number of rounds.
\end{definition}
The initial state is a pure state $\ket{0}\otimes\cdots\otimes\ket{0}\ket{0}$ without any prior entanglement. 
If the final state of the protocol on input $x_1,\dots,x_k$ is $\ket{\psi}$, it outputs 1 with probability $p(x_{1},\dots,x_{k})=\bra{\psi}\Pi_1\ket{\psi}$, where $\Pi_1$ is a projection onto the $\ket{1}$ state of the channel.

We say that $T$ is a {\it nondeterministic communication tensor} if $T[x_{1},\dots, x_{k}]\neq 0$ if and only if $f(x_{1},\dots,x_{k})=1$. Thus, $T$ can be obtained by replacing each 1-entry in the original communication tensor by a non-zero complex number. We also define the {\it nondeterministic rank} of $f$, denoted $nrank(f)$, to be the minimum rank over the complex field among all nondeterministic tensors for $f$.

\begin{definition}[Strong Quantum Nondeterministic Protocol]
A {\it $k$-party strong quantum nondeterministic communication protocol} outputs 1 with positive probability if and only if $f(x)=1$.
\end{definition}

The $k$-party quantum nondeterministic communication complexity, denoted $NQ_k(f)$, is the cost of an optimum (i.e., minimal cost) $k$-party quantum nondeterministic communication protocol. If the model is NIH or NOF, we add a superscript $NQ_k^{NIH}(f)$ or $NQ_k^{NOF}(f)$ respectively. From the definition it follows that $NQ_k$ is a lower bound for the exact quantum communication complexity $Q_k$ for both NOF and NIH.

The following lemma, given in Lee, Schechtman, and Shraibman \cite{Lee2009}, generalizes a previous observation made by Yao \cite{Yao1993} and Kremer \cite{Kremer1995} on 2-party protocols.


\begin{lemma}\label{lem:final-state}
After $\ell$ qubits of communication on input $(x_{1},\dots,x_{k})$, the state of a quantum protocol without prior shared entanglement can be written as
\[ \sum_{m\in\{0,1\}^\ell} \ket{A_m^{1}(x^1)}\ket{A_m^2(x^2)}\cdots \ket{A_m^k(x^k)}\ket{m_\ell}, \]
where $m_\ell$ is the $\ell$-th bit in $m$, and each vector $\ket{A_m^t(x^t)}$ corresponds to the $t$-th player which depends on $m$ and the input $x^t$. If the protocol is NOF then $x^t=(x_{1},\dots,x_{t-1},x_{t+1},\dots,x_{k})$; if it is NIH then $x^t=(x_{t})$.
\end{lemma}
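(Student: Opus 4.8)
The plan is to prove the claimed form by induction on the number of communicated qubits $\ell$, tracking how each player's local unitary acts on the joint state. The base case $\ell=0$ is immediate: before any communication the state is the unentangled product $\ket{0}\otimes\cdots\otimes\ket{0}\ket{0}$, which is already of the stated form with a single term $m=0^0$ (the empty string), each $\ket{A_m^t(x^t)}=\ket{0}$ and the channel in $\ket{0}$. The inductive step is where the real content lies, so I would set up the induction hypothesis as: after $\ell$ qubits, the state is $\sum_{m\in\{0,1\}^\ell}\ket{A_m^1(x^1)}\cdots\ket{A_m^k(x^k)}\ket{m_\ell}$.

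For the step, suppose it is player $t$'s turn to send the $(\ell+1)$-st qubit. By the definition of the quantum multiparty protocol, player $t$ applies a unitary $U$ acting only on $\mathcal{H}_t\otimes\mathcal{C}$ and as the identity elsewhere; crucially, $U$ depends only on the information $x^t$ available to player $t$ (all inputs except $x_t$ in NOF, or only $x_t$ in NIH). I would first expand the channel register in the computational basis, writing the pre-step state as a sum over the current channel bit $b=m_\ell$, and then apply $U$ to the factor $\ket{A_m^t(x^t)}\ket{b}$. Since $U$ is a fixed operator on a two-register space, $U\ket{A_m^t(x^t)}\ket{b}=\sum_{c\in\{0,1\}}\ket{B_{m,b,c}^t(x^t)}\ket{c}$ for suitable vectors $\ket{B_{m,b,c}^t(x^t)}$ that again depend only on $x^t$. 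Because $U$ is the identity on every $\mathcal{H}_s$ with $s\neq t$, the factors $\ket{A_m^s(x^s)}$ for $s\neq t$ are untouched. Collecting terms, I would reindex the sum by the new length-$(\ell+1)$ string $m'=mc$ obtained by appending the outgoing channel bit $c$; this is exactly the bookkeeping that makes the new index range over $\{0,1\}^{\ell+1}$ and makes $m'_{\ell+1}=c$ match the channel register, so the resulting state has the claimed form with player $t$'s vectors updated to the $\ket{B}$'s and all other players' vectors inherited unchanged.

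The main obstacle, and the step I would be most careful about, is the reindexing: I must verify that the dependence structure is preserved, namely that each new vector depends only on its own player's portion $x^t$ of the input and on $m'$, and that no cross-dependence on other players' inputs is introduced. This is guaranteed precisely because $U$ acts only on $\mathcal{H}_t\otimes\mathcal{C}$ and its choice depends only on $x^t$, so the only index coupling the players across the tensor product is the shared summation label $m'$, never the inputs. I would remark that the argument is identical for NOF and NIH except for the identity $x^t=(x_1,\dots,x_{t-1},x_{t+1},\dots,x_k)$ versus $x^t=(x_t)$, which only changes what $x^t$ denotes and not the algebraic manipulation; hence both cases follow from the same induction. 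This completes the proof.
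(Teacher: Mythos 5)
Your proof is correct. Note that the paper itself does not prove this lemma at all: it is imported verbatim from Lee, Schechtman, and Shraibman \cite{Lee2009}, who in turn generalize the Yao--Kremer decomposition for two-party protocols, and your induction on the number of communicated qubits---expanding the channel register in the computational basis, letting the active player's unitary act only on $\mathcal{H}_t\otimes\mathcal{C}$, and reindexing by the extended history string $m'=mc$ so that the channel holds $\ket{m'_{\ell+1}}$---is exactly the standard argument behind that cited result, including the key observation that the only coupling between players is the shared summation index $m$ and never the inputs.
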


\section{Proof of Theorem \ref{the:bounds}}\label{sec:bound}

\subsection{Lower Bound}
The arguments in this section are generalizations of a previous result by \cite{DeWolf2003} from 2-party  to $k$-party communication for $k\geq 3$. First we need the following technical lemma (see below for a proof).
\begin{lemma}\label{lem:family-vectors}
If there exist $k$ families of vectors such that $\{\ket{A_1^i(x_{i})},\dots,\ket{A_r^i(x_{i})}\}\subseteq \mathbb{C}^d$  for all $i$ with $1\leq i\leq k$ and $x_{i}\in\{0,1\}^n$ given that
\[
\sum_{i=1}^r \ket{A_i^1(x_{1})}\otimes\dots\otimes\ket{A_i^k(x_{k})}=0 \text{ iff } f(x_{1},\dots,x_{k})=0,
\]
then $nrank(f)\leq r$.
\end{lemma}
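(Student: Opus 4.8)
The plan is to collapse the vector-valued expression in the hypothesis into a scalar-valued order-$k$ tensor for $f$, keeping its rank at most $r$ while preserving exactly which entries vanish. For fixed inputs write
\[
v(x_1,\dots,x_k) = \sum_{i=1}^r \ket{A_i^1(x_1)} \otimes \cdots \otimes \ket{A_i^k(x_k)} \in (\mathbb{C}^d)^{\otimes k},
\]
so that by hypothesis $v(x_1,\dots,x_k)=0$ precisely when $f(x_1,\dots,x_k)=0$. This $v$ is already a sum of $r$ simple tensors, but its entries are vectors rather than scalars, so the task is to turn it into a scalar tensor indexed by $(x_1,\dots,x_k)\in(\{0,1\}^n)^k$ whose support is unchanged.

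First I would fix a single \emph{simple} vector $\ket{w}=\ket{w^1}\otimes\cdots\otimes\ket{w^k}$ with each $\ket{w^j}\in\mathbb{C}^d$, and define
\[
T[x_1,\dots,x_k] = \langle w \,|\, v(x_1,\dots,x_k)\rangle = \sum_{i=1}^r \prod_{j=1}^k \langle w^j | A_i^j(x_j)\rangle .
\]
Because $\ket{w}$ is simple the pairing factorizes across the $k$ tensor legs, so letting $\ket{u_i^j}\in\mathbb{C}^{2^n}$ be the vector with components $\langle w^j | A_i^j(x_j)\rangle$ gives $T=\sum_{i=1}^r \ket{u_i^1}\otimes\cdots\otimes\ket{u_i^k}$, a sum of $r$ simple order-$k$ tensors. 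Hence $T$ has rank at most $r$ no matter how $\ket{w}$ is chosen, and $T[x]=0$ automatically whenever $v(x)=0$, i.e.\ whenever $f(x)=0$.

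The step I expect to be the crux is choosing $\ket{w}$ so that $T[x]\neq 0$ for \emph{every} input with $v(x)\neq 0$, for then $T$ is a nondeterministic tensor for $f$ of rank at most $r$ and the lemma follows. I would parametrise the simple vectors by the point $(\ket{w^1},\dots,\ket{w^k})\in(\mathbb{C}^d)^k$. For each of the finitely many inputs $x$ with $v(x)\neq 0$, the assignment $(\ket{w^1},\dots,\ket{w^k})\mapsto \langle w | v(x)\rangle$ is a nonzero polynomial in the coordinates of the $\ket{w^j}$: it cannot vanish identically, since the simple tensors span $(\mathbb{C}^d)^{\otimes k}$ and so cannot all be orthogonal to the nonzero vector $v(x)$. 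Its zero locus is therefore a proper subvariety of measure zero, and the finite union of these loci over all such $x$ is again of measure zero, so a generic choice of $(\ket{w^1},\dots,\ket{w^k})$ avoids every bad locus simultaneously and yields the required $\ket{w}$. To make the polynomial genuinely holomorphic one may replace the Hermitian inner product by the bilinear pairing without conjugation, which is still nondegenerate and changes neither the factorization nor the rank bound; with this $\ket{w}$ fixed, $T$ has the correct support and rank at most $r$, giving $nrank(f)\leq r$.
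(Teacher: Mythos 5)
Your proof is correct, but it follows a genuinely different route from the paper's. The paper splits the $k$ families into two blocks of sizes $\lfloor k/2\rfloor$ and $\lceil k/2\rceil$, tensors each block together into a single family, and then invokes (for even $k$) or re-proves for unequal dimensions (for odd $k$) de Wolf's two-party argument: each block is contracted with scalar coefficients $\alpha_u,\beta_v$ drawn from a large finite set, and a counting/union-bound argument (a discrete probabilistic method) shows that some choice keeps $v(y,z)\neq 0$ on every $1$-input, yielding $M=\sum_{i=1}^r\ket{a_i}\bra{b_i}$ with the required support. You instead contract each of the $k$ modes separately against a generic simple tensor $\ket{w^1}\otimes\cdots\otimes\ket{w^k}$, replacing the finite counting argument by a measure-zero (Zariski) genericity argument over $\mathbb{C}$. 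Your route buys something concrete: because your contraction vector is simple, every term $\prod_{j}\langle w^j | A_i^j(x_j)\rangle$ factors across all $k$ modes, so $T$ is exhibited explicitly as a sum of $r$ simple order-$k$ tensors, the bound really is on tensor rank, and no case split on the parity of $k$ is needed. In the paper's construction the coefficient vectors are not required to be simple across their blocks, so each term $a_i(y)b_i(z)$ factors only across the bipartition $(y,z)$; as written this directly bounds the rank of one matrization of the resulting array, and since order-$k$ tensor rank can strictly exceed the rank of a flattening, an additional step---essentially insisting that the $\alpha$'s and $\beta$'s themselves be chosen in product form, which is exactly what your choice of $\ket{w}$ accomplishes---is needed to reach the stated conclusion $nrank(f)\leq r$. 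Your closing remark about replacing the Hermitian inner product by the bilinear pairing is also the right move, since it keeps the bad sets algebraic so that the genericity argument applies verbatim.
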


Now we proceed to prove the lower bound as stated in Theorem \ref{the:bounds}.
\begin{lemma}
$NQ_k^{NIH}(f)\geq \lceil \log nrank(f)\rceil+1$
\end{lemma}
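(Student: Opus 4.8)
The plan is to take an optimal NIH strong quantum nondeterministic protocol for $f$, invoke Lemma~\ref{lem:final-state} to write its final state in the canonical product form, and then show that the acceptance condition "output $1$ with positive probability iff $f=1$" translates into exactly the hypothesis of Lemma~\ref{lem:family-vectors}, thereby bounding $nrank(f)$ by the number of terms in the superposition — which is essentially $2^\ell$ after $\ell$ rounds. Let me sketch the details.

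**First I would** suppose the protocol uses $\ell = NQ_k^{NIH}(f)$ rounds. By Lemma~\ref{lem:final-state}, the final state on input $(x_1,\dots,x_k)$ is
\[
\ket{\psi} = \sum_{m\in\{0,1\}^\ell} \ket{A_m^1(x_1)}\cdots\ket{A_m^k(x_k)}\ket{m_\ell}.
\]
The acceptance probability is $p(x) = \bra{\psi}\Pi_1\ket{\psi}$, where $\Pi_1$ projects the channel qubit onto $\ket{1}$. Strong nondeterminism demands $p(x)>0$ iff $f(x)=1$, equivalently $p(x)=0$ iff $f(x)=0$. The key observation is that $\Pi_1$ keeps only those $m$ with $m_\ell=1$, so $\Pi_1\ket{\psi} = \sum_{m:\,m_\ell=1}\ket{A_m^1(x_1)}\cdots\ket{A_m^k(x_k)}\ket{1}$. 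Because $\Pi_1$ is a projection and the remaining $\ket{1}$ channel register factors out, $p(x) = \|\,v(x)\,\|^2$ where $v(x) = \sum_{m:\,m_\ell=1}\ket{A_m^1(x_1)}\otimes\cdots\otimes\ket{A_m^k(x_k)}$ is a vector in the (player) tensor product space. Hence $p(x)=0$ iff $v(x)=0$, and therefore
\[
\sum_{m:\,m_\ell=1}\ket{A_m^1(x_1)}\otimes\cdots\otimes\ket{A_m^k(x_k)} = 0 \quad\text{iff}\quad f(x_1,\dots,x_k)=0.
\]

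**Next I would** reindex the sum over $\{m\in\{0,1\}^\ell : m_\ell=1\}$, which has exactly $r = 2^{\ell-1}$ elements, as a sum $\sum_{i=1}^r$ of simple tensors with player factors $\ket{A_i^t(x_t)}$. In NIH each factor $\ket{A_m^t}$ depends only on player $t$'s own input $x_t$ (this is precisely the $x^t=(x_t)$ case of Lemma~\ref{lem:final-state}), so the families $\{\ket{A_1^t(x_t)},\dots,\ket{A_r^t(x_t)}\}$ meet the requirements of Lemma~\ref{lem:family-vectors}. Applying that lemma gives $nrank(f)\le r = 2^{\ell-1}$, whence $\log nrank(f)\le \ell-1$, i.e. $\ell \ge \log nrank(f)+1$, and taking ceilings yields $NQ_k^{NIH}(f)=\ell \ge \lceil\log nrank(f)\rceil + 1$.

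**The step I expect to be the main obstacle** is rigorously justifying that $p(x)=0$ forces the tensor $v(x)$ to vanish identically — one must be careful that the squared-norm reading of $\bra{\psi}\Pi_1\ket{\psi}$ is legitimate, using that $\Pi_1=\Pi_1^\dagger=\Pi_1^2$ so $p(x)=\bra{\psi}\Pi_1^\dagger\Pi_1\ket{\psi}=\|\Pi_1\ket{\psi}\|^2$, and that a norm is zero exactly when its argument is the zero vector. The NIH restriction is essential here and is exactly what makes the factors depend on a single input each, so I would emphasize that the same argument does \emph{not} give a lower bound in NOF, where each factor depends on $k-1$ inputs and Lemma~\ref{lem:family-vectors} no longer applies.
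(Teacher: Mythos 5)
Your proposal is correct and follows essentially the same route as the paper's own proof: invoke Lemma~\ref{lem:final-state} to get the product-form final state, restrict to the terms with channel bit $1$ (a set of size $2^{\ell-1}$), observe that this vector vanishes exactly when $f=0$, and apply Lemma~\ref{lem:family-vectors}. The only difference is that you spell out the step $p(x)=\|\Pi_1\ket{\psi}\|^2$ justifying the ``vanishes iff $f=0$'' claim, which the paper asserts without detail — a welcome addition, not a deviation.
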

\begin{proof}
Consider a NIH $\ell$-qubit protocol for $f$. By Lemma \ref{lem:final-state} its final state is
\begin{equation}
\ket{\psi}=\sum_{m\in \{0,1\}^\ell} \ket{A_m^1(x_{1})}\cdots \ket{A_m^k(x_{k})}\ket{m_\ell}.
\end{equation}
Assume all vectors have the same dimension $d$. Let $S=\{m\in\{0,1\}^\ell:m_\ell=1\}$, and consider only the part of the state that is projected onto the 1-state of the channel,
\begin{equation}
\ket{\phi(x_{1},\dots,x_{k})}=\sum_{m\in S} \ket{A_m^1(x_{1})}\cdots \ket{A_m^k(x_{k})}\ket{1}.
\end{equation}

The vector $\ket{\phi(x_{1},\dots,x_{k})}$ is 0 if and only if $f(x_{1},\dots,x_{k})=0$. Thus, by Lemma \ref{lem:family-vectors}, we have that $nrank(f)\leq |S|=2^{\ell-1}$, which implies the lower bound.
\end{proof}
\vspace{0.2cm}

\begin{proof}[Proof of Lemma \ref{lem:family-vectors}]
Let $k\geq 3$. We divide the proof in two cases, when $k$ is odd and even.
\vspace{0.2cm}

\noindent{\it Even $k$:} There are $k$ size-$r$ families of $d$-dimensional vectors. We will construct two new families of vectors denoted $\mathscr{D}$ and $\mathscr{F}$. First, divide the $k$ families in two groups of size $k/2$. Then, tensor each family in one group together in the following way: for each family $\{\ket{A_1^i(x_i)},\dots,\ket{A_r^i(x_i)}\}$ for $1\leq i \leq k/2$ construct a new family
\begin{align*}
\mathscr{D}	&=\left\{ \bigotimes_{j=1}^{k/2} \ket{A_1^j(x_{j})},\dots,\bigotimes_{j=1}^{k/2} \ket{A_r^j(x_{j})}\right\}\\
			&=\bigg\{  \ket{A_1(y)},\dots,\ket{A_r(y)} \bigg\},
\end{align*}
where $y=(x_{1},\dots,x_{k/2})$. Do the same to construct $\mathscr{F}$ for $k/2+1 \leq i \leq k$ obtaining
\begin{align*}
\mathscr{F}	&=\left\{ \bigotimes_{j=k/2+1}^{k} \ket{A_1^j(x_{j})},\dots,\bigotimes_{j=k/2+1}^{k} \ket{A_r^j(x_{j})}\right\}\\
			&=\bigg\{  \ket{B_1(z)},\dots,\ket{B_r(z)} \bigg\},\nonumber
\end{align*}
where $z=(x_{k/2+1},\dots,x_{k})$. Thus, $\mathscr{D}$ and  $\mathscr{F}$ will become two size-$r$ family of vectors, each vector with dimension $dk/2$. Then apply the theorem for $k=2$ from \cite{DeWolf2003} on these two families and the lemma follows.
\vspace{0.2cm}

\noindent{\it Odd $k$:} Here we can use the same approach by constructing again two new families $\mathscr{D}$ and $\mathscr{F}$ by dividing the families in two groups of size $\lfloor k/2\rfloor$ and $\lceil k/2\rceil$. However, although both families will have the same number of elements $r$, the dimension of the vectors will be different. In fact, the dimension of the vectors in one family will be $d'=d\lfloor k/2\rfloor$ and in the other $d'+1$. So, in order to prove the theorem we will consider having two families $\{\ket{A_1(y)},\dots,\ket{A_r(y)}\}\subseteq \mathbb{C}^{d'}$ and $\{\ket{B_1(z)},\dots,\ket{B_r(z)}\}\subseteq \mathbb{C}^{d'+1}$, both with cardinality $r$.

Denote the entry of each vector $\ket{A_i(y)},\ket{B_i(z)}$ by $A_i(y)_u$ and $B_i(z)_v$ respectively for all $(u,v)\in[d']\times[d'+1]$. Note that, if $f(y,z)=0$ then $\sum_{i=1}^r A_i(y)_u B_i(z)_v=0$ for all $(u,v)$; if $f(y,z)=1$ then $\sum_{i=1}^r A_i(y)_u B_i(z)_v\neq 0$ for some $(u,v)$. This holds because each vector $\ket{A_i(y)}$ and $\ket{B_i(z)}$ are the set of vectors $\ket{A_i^t(x^{t})}$ tensored together and separated in two families of size $\lfloor k/2\rfloor$ and $\lceil k/2\rceil$  respectively.

The following lemma was implicitly proved by de Wolf \cite{DeWolf2003} for families of vectors with the same dimension. However, we show that the same arguments hold even if the families have different dimensionality (see \ref{app:lemmas} for a proof).
\begin{lemma}\label{lem:existence}
Let $I$ be an arbitrary set of real numbers of size $2^{2n+1}$. Let $\alpha_1,\dots,\alpha_{d'}$ and $\beta_1,\dots,\beta_{d'+1}$ be numbers from $I$, and define the quantities
\[
a_i(y)=\sum_{u=1}^{d'} \alpha_u A_i(y)_u \quad\text{and}\quad b_i(z)=\sum_{v=1}^{d'+1} \beta_v B_i(z)_v.
\]
Also let
\[
v(y,z)=\sum_{i=1}^r a_i(y)b_i(z)=\sum_{u=1}^{d'} \sum_{v=1}^{d'+1}\alpha_u \beta_v \left(\sum_{i=1}^r A_i(y)_u B_i(z)_v\right).
\]
There exists $\alpha_1,\dots,\alpha_{d'},\beta_1,\dots,\beta_{d'+1}\in I$ such that for every $(y,z)\in f^{-1}(1)$ we have $v(y,z)\neq 0$. 
\end{lemma}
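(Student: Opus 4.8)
We have two families of vectors $\{|A_i(y)\rangle\}_{i=1}^r \subseteq \mathbb{C}^{d'}$ and $\{|B_i(z)\rangle\}_{i=1}^r \subseteq \mathbb{C}^{d'+1}$, indexed by $y,z$. We know that for $(y,z) \in f^{-1}(1)$, the "matrix" $M(y,z)[u,v] = \sum_i A_i(y)_u B_i(z)_v$ is nonzero (at least one entry). We want to find fixed coefficients $\alpha_1,\ldots,\alpha_{d'}$ and $\beta_1,\ldots,\beta_{d'+1}$ from some set $I$ of size $2^{2n+1}$ (real numbers) such that simultaneously for ALL $(y,z) \in f^{-1}(1)$, the quantity

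$$v(y,z) = \sum_{u,v} \alpha_u \beta_v M(y,z)[u,v] \neq 0.$$

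**Key observation.**

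For a fixed $(y,z) \in f^{-1}(1)$, $v(y,z)$ is a bilinear form in $(\alpha, \beta)$. As a polynomial in the variables $\alpha_1,\ldots,\alpha_{d'}, \beta_1,\ldots,\beta_{d'+1}$, it's a nonzero polynomial (because some $M(y,z)[u,v] \neq 0$). We want to find a common point where NONE of these polynomials vanish.

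**Counting argument / Schwartz-Zippel.**

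The natural approach: the number of constraints $(y,z) \in f^{-1}(1)$ is at most $2^n \times 2^n = 2^{2n}$. For each such constraint, $v(y,z)$ is a nonzero polynomial. We want a point avoiding all the zero-sets.

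The idea would be to choose $\alpha$'s and $\beta$'s one coordinate at a time from a large set $I$, showing we can avoid all the bad hyperplanes/hypersurfaces. Since $|I| = 2^{2n+1} > 2 \cdot 2^{2n} = $ twice the number of constraints, there's enough room.

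Let me think about how de Wolf's original argument works and how to adapt it to different dimensions.

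---

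The plan is to prove the existence of good coefficients $\alpha_1,\dots,\alpha_{d'},\beta_1,\dots,\beta_{d'+1}\in I$ by a greedy, coordinate-by-coordinate argument, exactly as de Wolf does for the equal-dimension case; the only thing to verify is that the asymmetry $d'$ versus $d'+1$ never interferes. First I would fix the structure of the problem: for each pair $(y,z)\in f^{-1}(1)$ the quantity $v(y,z)=\sum_{u=1}^{d'}\sum_{v=1}^{d'+1}\alpha_u\beta_v\bigl(\sum_{i=1}^r A_i(y)_u B_i(z)_v\bigr)$ is a fixed bilinear form in the unknowns $(\alpha,\beta)$, and since $(y,z)\in f^{-1}(1)$ guarantees that at least one coefficient $M(y,z)[u,v]:=\sum_i A_i(y)_u B_i(z)_v$ is nonzero, this bilinear form is a nonzero polynomial. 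The goal is to choose the $\alpha_u$ and $\beta_v$ so that $v(y,z)\neq 0$ simultaneously for all at most $2^{2n}$ pairs $(y,z)\in f^{-1}(1)$.

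The key step is to first fix the $\alpha$'s so as to make the linear form $a_i(y)=\sum_u \alpha_u A_i(y)_u$ nonzero on the appropriate directions, and then fix the $\beta$'s against the resulting linear forms. More precisely, I would argue as follows. For a fixed $(y,z)$, regard $v(y,z)$ as a function of $\beta_1,\dots,\beta_{d'+1}$ with the $\alpha$'s already chosen: it is the linear form $\sum_{v} \beta_v\bigl(\sum_u \alpha_u M(y,z)[u,v]\bigr)=\sum_v \beta_v\, c_v(y,z)$. This vanishes identically in $\beta$ precisely when every $c_v(y,z)=\sum_u \alpha_u M(y,z)[u,v]=0$, i.e.\ when the fixed $\alpha$-vector is orthogonal to every column of the matrix $M(y,z)$. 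So the first subgoal is to choose $\alpha_1,\dots,\alpha_{d'}\in I$ so that for every $(y,z)\in f^{-1}(1)$ the $\alpha$-vector is \emph{not} orthogonal to all columns of $M(y,z)$, equivalently so that at least one $c_v(y,z)\neq 0$. Since each $M(y,z)$ has a nonzero entry, it has a nonzero column, so for each $(y,z)$ the set of bad $\alpha$ (those killing that column) is a proper linear subspace, hence avoided by a Schwartz--Zippel/counting bound once $|I|$ exceeds the number of constraints; picking the $\alpha_u$ greedily from $I$ one coordinate at a time, and using $|I|=2^{2n+1}>2^{2n}$, handles all $(y,z)\in f^{-1}(1)$ at once. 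The asymmetric dimension enters only in that the relevant column-space sits in $\mathbb{C}^{d'}$, which is harmless.

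With the $\alpha$'s fixed, the second step is symmetric: for every $(y,z)\in f^{-1}(1)$ we now have a \emph{nonzero} linear form $\sum_v \beta_v\,c_v(y,z)$ in the $d'+1$ variables $\beta$, and I choose $\beta_1,\dots,\beta_{d'+1}\in I$ greedily so that none of these $\le 2^{2n}$ forms vanish at the chosen point; again $|I|=2^{2n+1}$ is large enough by the same counting/Schwartz--Zippel estimate, and the fact that there are $d'+1$ rather than $d'$ variables here is precisely the harmless asymmetry the lemma is designed to absorb. Combining the two steps yields $v(y,z)\neq 0$ for all $(y,z)\in f^{-1}(1)$, as required.

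The main obstacle, and the point requiring the most care, is bookkeeping the quantifiers in the greedy selection: one must ensure that a \emph{single} choice of $\alpha$ works for \emph{all} $(y,z)$ simultaneously (not a separate choice per pair), and likewise for $\beta$. This is exactly where the size $|I|=2^{2n+1}$ is used — it must dominate the total number of constraints across all pairs $(y,z)\in f^{-1}(1)$, of which there are at most $2^{2n}$, with the extra factor of $2$ providing the slack needed to guarantee a surviving value of $I$ at each coordinate. The essential content beyond de Wolf's original proof is simply verifying that his counting goes through verbatim when the two families live in $\mathbb{C}^{d'}$ and $\mathbb{C}^{d'+1}$ rather than in a common $\mathbb{C}^{d}$; since the argument only ever uses that each $M(y,z)$ has a nonzero entry and never uses squareness of $M(y,z)$, the differing dimensionality causes no difficulty.
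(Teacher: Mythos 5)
Your proof is correct, and it reaches the conclusion by a genuinely different decomposition than the paper's. The paper transplants de Wolf's argument in one shot: all of $\alpha_1,\dots,\alpha_{d'},\beta_1,\dots,\beta_{d'+1}$ are drawn at random from $I$ simultaneously, and for each $(y,z)\in f^{-1}(1)$ one conditions on every coordinate except the distinguished pair $(\alpha_{u'},\beta_{v'})$ for which $c_0=\sum_i A_i(y)_{u'}B_i(z)_{v'}\neq 0$; the restricted polynomial $c_0\alpha_{u'}\beta_{v'}+c_1\alpha_{u'}+c_2\beta_{v'}+c_3$ has at most $2|I|-1$ zeros on $I\times I$, giving per-pair failure probability below $2^{-2n}$, and a union bound over the at most $2^{2n}$ pairs finishes. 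You instead decouple the two blocks of variables: first fix $\alpha$ so that for every pair the vector of column contractions $c_v(y,z)=\sum_u \alpha_u M(y,z)[u,v]$ is not identically zero (each such constraint is a nonzero linear form in $\alpha$, so degree-one Schwartz--Zippel plus a union bound applies), then fix $\beta$ against the resulting nonzero linear forms in $\beta$. Both are probabilistic-method counting arguments, and both use only that $M(y,z)$ has some nonzero entry --- never squareness --- so the $d'$ versus $d'+1$ asymmetry is immaterial in either; your accounting of at most $2^{2n}$ constraints also matches the paper's. What each buys: your two-stage version needs only zero-counting for linear forms and in fact succeeds with any $|I|\geq 2^{2n}+1$, whereas the paper's bivariate count genuinely needs $|I|$ of order $2^{2n+1}$; the paper's one-shot version, on the other hand, is a verbatim re-run of de Wolf's two-party proof, which is exactly the point the lemma is meant to make. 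One caution: your phrase about picking the $\alpha_u$ ``greedily one coordinate at a time'' should not be read as a literal greedy procedure, since the goodness of a partial assignment is not well defined until all coordinates are set; what makes the step rigorous is the random-choice/counting form of Schwartz--Zippel that you also invoke (or a method-of-conditional-expectations derandomization), so that is the formulation to keep.
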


Therefore, by the lemma above we have that $v(y,z)=0$ if and only if $f(y,z)=0$. Now let $\ket{a_i}$ and $\ket{b_i}$ be $2^n$-dimensional vectors indexed by elements from $\{0,1\}^n$, and let $M=\sum_{i=1}^r \ket{a_i}\bra{b_i}$. Thus $M$ is a nondeterministic order-$k$ tensor of rank $r$.
\end{proof}

\subsection{Upper Bound}
The proof of the upper bound follows by fixing a proper matrization (separating the cases of odd and even $k$) of the communication tensor, and then applying the 2-party protocol by de Wolf\cite{DeWolf2003}.

\begin{lemma}\label{lem:nof-protocol}
$NQ_k^{NOF}(f)\leq\lceil \log nrank(f)\rceil+1$.
\end{lemma}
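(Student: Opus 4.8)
The plan is to establish the upper bound $NQ_k^{NOF}(f)\leq\lceil \log nrank(f)\rceil+1$ by reducing the $k$-party NOF problem to a $2$-party protocol, dualizing the strategy used for the lower bound. Let $r=nrank(f)$, so by definition there is a nondeterministic tensor $T$ of rank $r$ with $T=\sum_{i=1}^r \ket{v_i^1}\otimes\cdots\otimes\ket{v_i^k}$, where $T[x_1,\dots,x_k]\neq 0$ iff $f(x_1,\dots,x_k)=1$. In the NOF model, player $t$ sees every input except $x_t$, which is exactly the resource needed to group the $k$ tensor factors into two blocks and treat the problem as a bipartite one: I would split $\{1,\dots,k\}$ into two groups and matrize $T$ accordingly, so that the two ``super-players'' each hold one flattened vector, and then invoke the $2$-party strong quantum nondeterministic protocol of de Wolf \cite{DeWolf2003}, which communicates $\lceil\log r\rceil+1$ qubits for a bipartite nondeterministic matrix of rank $r$.

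First I would handle the even case. Divide the players into two groups of size $k/2$. For a fixed input, the matrization collapses the first $k/2$ factors into a single row-vector $\ket{A(y)}=\bigotimes_{j=1}^{k/2}\ket{v^j}$ indexed by $y=(x_1,\dots,x_{k/2})$ and the remaining $k/2$ factors into $\ket{B(z)}$ indexed by $z=(x_{k/2+1},\dots,x_k)$, yielding a matrix $M$ of rank at most $r$ with $M[y,z]\neq 0$ iff $f=1$. The crucial point is that in NOF \emph{some} player in the first group can compute $\ket{A(y)}$ because that quantity depends only on $x_1,\dots,x_{k/2}$, and any such input is visible to players outside the first group; symmetrically for the second block. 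One then runs the de Wolf $2$-party protocol on $M$, where the unitaries applied are exactly the ones realizable by the NOF visibility pattern, giving cost $\lceil\log r\rceil+1$. For odd $k$ I would split into blocks of size $\lfloor k/2\rfloor$ and $\lceil k/2\rceil$; the flattened vectors then live in spaces of (possibly) different dimension, but the $2$-party protocol does not require equal dimensions, and the matrix $M=\sum_{i=1}^r\ket{a_i}\bra{b_i}$ still has rank at most $r$, so the same bound follows.

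The main obstacle I expect is not the counting but verifying that the matrization is \emph{consistent with the NOF communication pattern}, i.e.\ that each super-player's unitary can actually be implemented by the real players who are allowed to see the relevant inputs. In the bipartite protocol, the first super-player must produce state amplitudes depending on $y$ and the second on $z$; one must check that for each block there is a designated player whose forehead-visibility covers every coordinate of that block, so that the required block-dependent unitary is legal in NOF. Because a player never sees its own input, the grouping has to be arranged so that the player executing a block's unitary lies outside that block—this is precisely why the split into two disjoint groups works, and why the argument is genuinely using the NOF structure rather than NIH. Once this consistency is established, the rest is bookkeeping: the permutation of columns in the matrization is immaterial to rank (as noted following the matrization discussion), and substituting $r=nrank(f)$ into the $2$-party bound of \cite{DeWolf2003} yields $NQ_k^{NOF}(f)\leq\lceil\log nrank(f)\rceil+1$, completing the proof.
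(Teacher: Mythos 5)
Your proof is correct, and for even $k$ it is essentially the paper's proof: fix a matrization of a rank-$r$ nondeterministic tensor into two blocks of $k/2$ modes each, and run de Wolf's SVD-based 2-party protocol, with the roles assigned so that the player preparing the state that depends on block $Y$ lies outside $Y$ and the player verifying against block $Z$ lies outside $Z$ --- exactly the paper's choice of $P_k$ (Bob) computing $\Sigma V\ket{x_1,\dots,x_{k/2}}$ and $P_1$ (Alice) applying $U$, measuring, and comparing the outcome with $x_{k/2+1},\dots,x_k$. (One slip: in your second paragraph you write that some player \emph{in} the first group can compute $\ket{A(y)}$; it must be a player \emph{outside} that group, as you yourself state correctly in your third paragraph, so I read this as a typo rather than a gap.) Where you genuinely diverge is the odd case: you split the modes into groups of sizes $\lfloor k/2\rfloor$ and $\lceil k/2\rceil$ and observe that the 2-party protocol tolerates a rectangular matrization, since the singular value decomposition and the compression to $\lceil\log r\rceil$ qubits are insensitive to the row and column dimensions being unequal. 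The paper instead pads $T$ to an order-$(k+1)$ tensor $T'$ of the same rank by tensoring with the all-ones vector (Lemma \ref{lem:tensor}), so that the order becomes even and the balanced even-case protocol applies verbatim. Both routes are sound: yours is more direct and avoids the auxiliary padding lemma, at the cost of having to check (as you do, correctly) that nothing in the 2-party protocol requires a square matrix; the paper's padding keeps the matrization square so the even case can be invoked without re-examination. It is worth noting that the paper's own lower-bound argument (Lemma \ref{lem:existence}) handles families of vectors of unequal dimension head-on, so your uniform treatment of the odd case is very much in the spirit of the paper's other half.
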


\begin{proof}
Let $T$ be a nondeterministic tensor for $f$ with $nrank(f)=r$. We divide the proof in two cases.
\vspace{0.2cm}

\noindent {\it Even $k$:} Fix two players, say $P_1$ (Alice) and $P_k$ (Bob).  Also fix some matrization of $T$, i.e., let $M$ be such matrization and consider it as an operator $M:\mathcal{H}_{k/2+1}\otimes\cdots\otimes \mathcal{H}_k\to\mathcal{H}_1\otimes\cdots\otimes\mathcal{H}_{k/2}$. Thus $M$ is a $2^{kn/2}\times 2^{kn/2}$-matrix that maps elements from the $\mathcal{H}_{k/2+1}\otimes\cdots\otimes \mathcal{H}_k$ subspace to the $\mathcal{H}_1\otimes\cdots\otimes\mathcal{H}_{k/2}$ subspace. Let also $M=U\Sigma V$ be the singular value decomposition of $M$ such that $U,V$ are $2^{kn/2}\times 2^{kn/2}$ unitary matrices, and $\Sigma$ is a $2^{kn/2}\times 2^{kn/2}$ diagonal matrix containing the singular values of $M$ in the diagonal. The number of singular values is at most $rank(M)\leq r$.

Bob computes the state $\ket{\phi_{1\cdots k/2}}=c_{1\cdots k/2} \Sigma V\ket{x_1,\dots,x_{k/2}}$ where $c_{1\cdots k/2}$ is some normalizing constant that depends on $x_1,\dots,x_{k/2}$. Since only the first entries of $\Sigma$ are non-zero, $\ket{\phi_{1\cdots k/2}}$ has at most $r$ non-zero entries, so the state can be compressed using $\log r$ qubits\footnote{A $n$ dimensional vector can be encoded as a quantum state with $\log n$ qubits by observing that a $k$-qubit state is a $2^k$-dimensional vector. This fact was used by Raz \cite{Raz1999} to show an exponential separation between classical and quantum 2-party communication.}. Bob sends these qubits to Alice. Alice then computes $U\ket{\phi_{1\cdots k/2}}$ and measures that state. If Alice observes $x_{k/2+1},\dots,x_k$ then she puts a 1 on the qubit channel, and otherwise she puts a 0. The probability of Alice putting a 1 on the channel is
\begin{IEEEeqnarray*}{rCl}
\IEEEeqnarraymulticol{3}{l}{\left|\bra{x_{k/2+1},\dots,x_k}U\ket{\phi_{1\cdots k/2}}\right|^2}\\
	\quad\quad&=& |c_{1\dots,k/2}|^2\left|\bra{x_{k/2+1},\dots,x_k}U\Sigma V\ket{x_1,\dots,x_{k/2}}\right|^2\\
	\quad\quad&=&|c_{1\dots,k/2}|^2\left|\bra{x_{k/2+1},\dots,x_k}M\ket{x_1,\dots,x_{k/2}}\right|^2\\
	\quad\quad&=&|c_{1\dots,k/2}|^2 \left|M[x_1,\dots, x_k]\right|^2\\
	\quad\quad&=&|c_{1\dots,k/2}|^2 \left|T[x_1,\dots, x_k]\right|^2.
\end{IEEEeqnarray*}
Since $T[x_1,\dots, x_k]$ is non-zero if and only if $f(x_1,\dots,x_k)=1$, this probability will be positive if and only if $f(x_1,\dots,x_k)=1$. Thus, this is a nondeterministic protocol with total cost  $\log r+1$.
\vspace{0.2cm}

\noindent{\it Odd $k$:} To use the protocol given in the even case, we add an extra degree of freedom to $T$.

\begin{lemma}\label{lem:tensor}
If $T$ is an order-$k$ tensor with rank $r$ then, there exists a tensor $T'$ of order $k+1$ with rank $r$ where $T[x_1,\dots, x_k]=T'[x_1,\dots, x_k x_{k+1}]$ for all $x_{k+1}$.
\end{lemma}

By the above lemma we  have that $T'[x_1,\dots, x_k x_{k+1}]=0$ if and only if $f(x_1,\dots,x_k)=0$ for any given $x_{k+1}$. See \ref{app:lemmas} for a proof.

Before the protocol starts, each player knows $T'$ (which has even order) and its matrization $M'$. We fix two players, $P_1$ (Alice) and $P_k$ (Bob), and they can now use the protocol for even $k$.
\end{proof}

\section{Rank Lower Bound for the Generalized Inner Product}\label{app:gip-rank}
In this section we give a lower bound on the nondeterministic rank of the Generalized Inner Product (GIP) function.

\begin{lemma}
$nrank(GIP_k)\geq (k-1)2^{n-1}+1$.
\end{lemma}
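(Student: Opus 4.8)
The plan is to lower-bound the rank of an \emph{arbitrary} nondeterministic tensor $T$ for $GIP_k$ by exploiting the recursive behaviour of the function under fixing one player's input. First I would record the key restriction property: fixing player $k$'s input to the all-ones string $1^n$ gives $GIP_k(x_1,\dots,x_{k-1},1^n)=GIP_{k-1}(x_1,\dots,x_{k-1})$, fixing it to $0^n$ yields the constant-$0$ function, and fixing it to a string whose set of $1$'s is $\emptyset\subsetneq I\subseteq[n]$ restricts $GIP_k$ to the coordinates in $I$. Hence the $2^n$ slices of $T$ along mode $k$ form a structured family: the $1^n$-slice is itself a nondeterministic tensor for $GIP_{k-1}$, and the remaining slices carry the lower-dimensional instances. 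This slice structure is the object I would drive the induction on $k$ with.

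Next I would isolate where the difficulty lives. Every matricization (mode-grouping) of $T$ collapses to a two-party inner-product pattern on derived $n$-bit vectors, so by the standard unfolding bound (Kolda and Bader \cite{Kolda2009}) its matrix rank is at most $2^n$. Thus no single flattening can by itself produce a bound larger than $2^n$, and the factor $(k-1)$ must come from a genuinely tensor-theoretic, \emph{super-additive} argument rather than from matrix rank. The natural tool is the substitution method for tensor-rank lower bounds, as used by Raz \cite{Raz2010} and Alexeev, Forbes and Tsimerman \cite{Alexeev2011}. Applied along mode $k$, I would peel off the mode-$k$ direction while arguing that the fixed nonzero pattern of $GIP_k$ forces a rank increment of $2^{n-1}$ during that elimination, and that the tensor left behind still contains a nondeterministic tensor for $GIP_{k-1}$. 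This would give the recursion $nrank(GIP_k)\ge 2^{n-1}+nrank(GIP_{k-1})$.

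For the base case I would take $k=2$, where the statement is the two-party bound $nrank(GIP_2)=nrank(IP_n)\ge 2^{n-1}+1$, proved by the de Wolf-style nondeterministic-rank argument for the inner-product matrix \cite{DeWolf2003}. Unwinding the recursion then yields $nrank(GIP_k)\ge (k-2)2^{n-1}+(2^{n-1}+1)=(k-1)2^{n-1}+1$, as claimed, and feeding this into Theorem \ref{the:bounds} gives the advertised $NQ_k^{NIH}(GIP_k)$ lower bound.

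The hard part is the nondeterministic freedom. Because the entries of $T$ on $f^{-1}(1)$ may be \emph{arbitrary} nonzero complex numbers, the per-step increment of $2^{n-1}$ must be forced by the zero/nonzero pattern alone and be robust to every admissible choice of values; this is exactly the constraint that makes the naive triangular or identity-submatrix lower bounds (which cap out near $n$) and any single flattening (which caps out at $2^n$) insufficient. Concretely, the delicate point is to make one substitution step simultaneously charge $2^{n-1}$ to the elimination of the mode-$k$ direction and preserve the full $GIP_{k-1}$ structure in the reduced tensor, since a generic flattening only exposes rank $2^n$. Overcoming this flattening barrier at $2^n$ is where the argument must depart decisively from the two-party case, and it is where essentially all of the technical work is concentrated.
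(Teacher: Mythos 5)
You have proposed a strategy, not a proof, and the place where it fails is exactly the place you yourself flag. The entire bound rests on the recursion $nrank(GIP_k)\ge 2^{n-1}+nrank(GIP_{k-1})$, which you assert should follow from a substitution-method step ``charging $2^{n-1}$ to the elimination of mode $k$,'' but you give no argument for it; you close by saying this is ``where essentially all of the technical work is concentrated.'' The substitution method as used by Raz and by Alexeev--Forbes--Tsimerman applies to tensors with \emph{known} entries; here the tensor's nonzero entries are adversarial, constrained only by the support of $GIP_k$, and nothing you write shows that the support pattern alone forces any rank increment under substitution, let alone one of size $2^{n-1}$ that simultaneously leaves a nondeterministic tensor for $GIP_{k-1}$ intact. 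So the central claim is missing, not merely unpolished.

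Your base case is also false as stated. You claim $nrank(IP_n)\ge 2^{n-1}+1$ ``by the de Wolf-style argument,'' but the bound de Wolf's techniques give for inner product is of order $2^{n/2}$ (i.e.\ $NQ(IP_n)\gtrsim n/2$), and the stronger inequality you need simply does not hold: already for $n=2$, the matrix whose rows and columns are indexed by $01,10,11$ (with the zero row and column for $00$ appended)
\[
\begin{pmatrix} 1 & 0 & 1\\ 0 & 1 & 1\\ 1 & -1 & 0 \end{pmatrix}
\]
has exactly the support of $IP_2$ and determinant $1\cdot(0+1)+1\cdot(0-1)=0$, hence rank $2<2^{n-1}+1=3$. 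Minimum rank is a property of the support \emph{together with} the freedom to choose entries; it is not determined by the Boolean pattern. Consequently, even if your recursion were proved, unwinding it would give $(k-2)2^{n-1}+nrank(IP_n)$, which can fall short of the target $(k-1)2^{n-1}+1$ by nearly $2^{n-1}$.

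For comparison, the paper's proof is not inductive at all: it fixes all modes except the first two to the all-ones string (and to all-ones with one bit flipped), observes that the resulting slices $T'$ and $T_i'$ have the supports of $IP_n$ and $IP_{n-1}$, assigns them ranks $2^n-1$ and $2^{n-1}-1$, and adds these contributions inside the single mode-$1$ unfolding $T_{(1)}$. You should note, though, that two of your observations cut against the paper itself: the identification of slice rank with the rank of the Boolean matrix $M_{IP_n}$ is precisely the overclaim refuted by the $3\times 3$ example above, and your ``flattening barrier'' remark is correct --- $T_{(1)}$ is a $2^n\times 2^{(k-1)n}$ matrix, so no mode-$1$ unfolding argument can certify rank above $2^n$, whereas the paper's conclusion exceeds $2^n$ once $k\ge 4$. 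So your structural critique of flattening-based arguments is sound, but your alternative route is incomplete at its core step and rests on a base case that is provably false.
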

\begin{proof}
First, we start by generalizing the concept of rows and columns for tensors. Define a {\it fiber} to be a vector obtained by fixing every index except by one. In general, a mode-$i$ fiber is a vector obtained by fixing all except the $i^{th}$ index. Thus, a matrix column is a mode-1 fiber, and a row is a mode-2 fiber. For order-3 tensors, we have columns, rows and tubes, and so on for higher order tensors.  In the same way we define a {\it slice} to be a two-dimensional section of $T$ obtained by fixing all but two indices.

Here we will consider a particular form of matrization. Let $T\in \mathbb{C}^{n_1\times \cdots \times n_k}$ be an order-$k$ tensor, with $n_i=2^n$ for every $i$. The {\it $i$-mode unfolding} of $T$, denoted $T_{(i)}$,  is the matrix obtained by arranging the $i$-mode fibers as columns. The permutations of the columns of $T_{(i)}$ is not important, as long as the corresponding operations remain consistent; see Kolda and Bader \cite{Kolda2009}. Define the $i$-$rank$ of $T$ as $rank_i(T)=rank(T_{(i)})$. It is trivial that $rank_i(T)\leq rank(T)$ for every $i$; see Lathauwer, de Moore, and Vandewalle \cite{Lathauwer2000}.

Now we proceed with the proof. Let $T$ be the order-$k$ nondeterministic communication tensor for $GIP_k$. Let $M_{IP_n}$ be the Boolean communication matrix for $GIP_2$, i.e., the 2-party inner product function on $n$ bits. It is well known that $rank(M_{IP_n})=2^n-1$; see Example 1.29 in Kushilevitz and Nisan\cite{Kushilevitz1997}. The same holds even if $M_{IP_n}$ is defined over $\mathbb{C}$.

Let ${\bf 1}$ denote the string of length $n$ with only 1s in it, and let $T'$ be the $(x_3',\dots,x_k')$-slice of $T$ where $x_i'={\bf 1}$ for $i=3,\dots,k$. In this way $T'[x_1,x_2]\neq 0$ whenever $\braket{x_1}{x_2}=1$ and hence $rank(T')=rank(M_{IP_n})=2^n-1$.

Let $x^{(i)}$ denote the string $x$ with the $i^{th}$ bit flipped. For $i=3,\dots,k$ consider the $(x_3',\dots,x_k'^{(i)})$-slice of $T$ denoted $T_i'$ where $x_k'^{(i)}$ is the string ${\bf 1}$ with the $i^{th}$ bit flipped to 0. Then,
\begin{equation}\label{eq:matrization}
T_i'[x_1,x_2]\neq 0\text{ whenever } \braket{x_1}{x_2}-x_{1i}x_{2i}=1.
\end{equation}
Note that the non-zero entries of $T_i'$ for any $i$ agrees with the non-zero entries of $M_{IP_{n-1}}$, where $M_{IP_{n-1}}$ is obtained by deleting the $i^{th}$ bits of $x_1$ and $x_2$ in $M_{IP_n}$ for all $x_1$ and $x_2$. Thus, $rank(T_i')=2^{n-1}-1$ for all $i=3,\dots,k$.

The $1$-mode unfolding of $T$ is obtained by fixing every index except $x_1$. Thus
\[
T_{(1)}=\begin{bmatrix}
T'	& T_3'	& \cdots	& T_k'	& \cdots
\end{bmatrix},
\]
with $2^{(k-1)n}$ columns, and the right part of $T_{(1)}$ (after $T_k'$) is filled with the remaining slices of $T$ that are different to $T'$ and each $T_i'$.  We known that $T'$ and each $T_i'$ have $(2^n-1)$ and $2^{n-}-1$ linearly independent columns respectively. Also, each of these columns are pair-wise linearly independent. To see this, just take take any two slices $T_i'$ and $T_j'$ for any $i\neq j$, fix one column in each and compute the inner product according to Equation \ref{eq:matrization}. Thus, $rank(T)\geq rank_1(T)\geq 2^n-1+(k-2)(2^{n-1}-1)=(k-1)2^{n-1}+1$.
\end{proof}

\section{Some Separations for Complexity Classes}\label{sec:separation}
In this section we take a complexity-theoretic view of quantum multiparty communication complexity. For this model we consider as ``efficient communication'' when a protocol computes a function with $polylog(n)$ bits \cite{Babai1986}.

\begin{definition}
We define the following communication complexity classes:
\begin{enumerate}
\item $BPP^{cc}$ is the class of boolean functions with a classical bounded-error protocol of cost $polylog(n)$;
\item $BQP^{cc}$ is the class of boolean functions with a quantum bounded-error protocol of cost $polylog(n)$;
\item $NQP^{cc}$ is the class of boolean functions with a quantum strong nondeterministic protocol of cost $polylog(n)$.
\end{enumerate}
\end{definition}

In the following we present two theorems that give separations between the complexity classes defined above. First, for better understanding, we start by showing a weaker nevertheless easier to prove result, a separation between $NQP^{cc}$ and $BPP^{cc}$. Then we use that result to separate $NQP^{cc}$ from $BQP^{cc}$. Although this latter result can be proved without the need of the former, starting with the separation from $BPP^{cc}$ seems easier to understand.

\begin{theorem}\label{the:bpp-gap}
For NOF communication we have that $NQP^{cc} \nsubseteq BPP^{cc}$ whenever the number of players $k=o(\log\log n)$.
\end{theorem}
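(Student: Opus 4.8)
The plan is to exhibit a single total function $f:(\{0,1\}^n)^k\to\{0,1\}$ that lies in $NQP^{cc}$ for NOF but not in $BPP^{cc}$ for NOF. By the class definitions, $f\in NQP^{cc}$ as soon as $NQ_k^{NOF}(f)=polylog(n)$, and by Theorem \ref{the:bounds} this is guaranteed whenever $nrank(f)=2^{polylog(n)}$; dually, $f\notin BPP^{cc}$ follows from a super-polylogarithmic lower bound on the randomized NOF complexity $R_k^{NOF}(f)$. So the whole separation reduces to producing one $f$ that is cheap in nondeterministic tensor rank yet hard for randomized NOF communication. For the hard side I would not reprove anything: I would import the classical separation of nondeterministic from randomized NOF communication of \cite{David2009,Chatt2008,Gavinsky2010}, which for $k=o(\log\log n)$ players supplies an explicit total $f$ with $R_k^{NOF}(f)=n^{\Omega(1)}=\omega(polylog(n))$. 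The interesting direction, and the only place where Theorem \ref{the:bounds} enters, is the tensor-rank upper bound.

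The concrete candidate I would analyze first is non-disjointness, $f(x_1,\dots,x_k)=\bigvee_{j=1}^n\bigwedge_{i=1}^k x_{ij}$ (\emph{there is a coordinate on which all players hold a $1$}), whose randomized NOF complexity equals that of set disjointness and is the object of the cited lower bounds. The key observation is that a \emph{simple} tensor $\ket{u_1}\otimes\cdots\otimes\ket{u_k}$ is supported exactly on the combinatorial box $\mathrm{supp}(u_1)\times\cdots\times\mathrm{supp}(u_k)$; hence, if $f^{-1}(1)$ is a union of $t$ boxes, then a generic complex linear combination of the $t$ corresponding rank-$1$ tensors has support exactly $f^{-1}(1)$ (the finitely many nonempty subset sums are each a nonzero linear form in the coefficients, so a generic choice avoids all their zero sets), yielding $nrank(f)\le t$. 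This is precisely the multiparty analogue of de Wolf's matrix argument in \cite{DeWolf2003}, with boxes replacing rectangles. For non-disjointness $f^{-1}(1)=\bigcup_{j=1}^n B_j$ with $B_j=\{x:\ x_{1j}=\cdots=x_{kj}=1\}$ a single box, so $t=n$ and $nrank(f)\le n$; thus $\log nrank(f)\le\log n$ and Theorem \ref{the:bounds} gives $NQ_k^{NOF}(f)\le\lceil\log n\rceil+1$, \emph{independently of $k$}. Hence $f\in NQP^{cc}$, and since $BPP^{cc}$ is closed under complement, the imported randomized lower bound gives $f\notin BPP^{cc}$. The admissible range of $k$ is therefore dictated entirely by the randomized lower bound, not by the rank estimate.

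The subtle point, and the one I expect to be the real obstacle, is that ``cheap nondeterminism'' does not by itself control the tensor rank: small \emph{NOF} nondeterministic complexity does \emph{not} bound $nrank$. A single cylinder intersection need not be a box; for instance the $3$-party predicate $[x_1=x_2]$ is one monochromatic cylinder intersection (so it has $O(1)$ nondeterministic NOF cost), yet every nondeterministic tensor for it is a diagonal in $(x_1,x_2)$ tensored with an all-ones fibre in $x_3$, and each $x_3$-slice already forces rank $2^n$, so $nrank=2^n$. What the rank bound above actually needs is a small \emph{box} cover, equivalently small \emph{NIH} nondeterministic complexity, which is a strictly stronger demand than small NOF nondeterminism. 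The crux is thus to certify that the chosen hard function admits a cover by only $2^{polylog(n)}$ boxes while remaining hard for randomized NOF: non-disjointness threads this needle trivially ($n$ boxes), and for the explicit function of \cite{David2009} — built as a disjunction of coordinate-local gadgets — one must verify the analogous decomposition into few boxes. Establishing that box-cover bound for whichever function yields the optimal $k$-range is where the technical work lies; the appeal to Theorem \ref{the:bounds} and to closure of $BPP^{cc}$ under complement are then immediate.
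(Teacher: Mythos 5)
Your proposal is correct, and its skeleton is the same as the paper's: pick an explicit ``symmetric function composed with bitwise AND,'' certify $nrank\le n$ by writing a nondeterministic tensor as a sum of $n$ coordinate rank-$1$ tensors so that Theorem~\ref{the:bounds} gives $NQ_k^{NOF}=O(\log n)$, and then invoke a Lee--Shraibman-type randomized NOF lower bound of the form $\Omega\bigl(n^{1/(k+1)}/(k2^{2^k})\bigr)$, which is super-polylogarithmic exactly when $k=o(\log\log n)$. The differences are in the instantiation, and they are worth noting. The paper uses the ``exactly-one'' predicate $f=1$ iff $|x_1\land\cdots\land x_k|\neq 1$ (de Wolf's 2-party separating function \cite{DeWolf2003}), with the explicit tensor $\sum_j T_j-\mathbf{1}$ whose entries equal $|x_1\land\cdots\land x_k|-1$; because this function is not disjointness, the paper must \emph{derive} the randomized lower bound through the $\mu^\alpha$-norm machinery, via Lemma~\ref{lem:deg-bound} (a generalization of \cite[Corollary 6.1]{Lee2008} to arbitrary symmetric outer functions) combined with Paturi's approximate-degree bound. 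You instead take non-disjointness $\bigvee_j\bigwedge_i x_{ij}$, for which the literature's disjointness bound applies verbatim (after complementation, as you note), so no generalized lemma is needed; your generic-coefficients box-cover argument for $nrank\le n$ is also correct, though for this particular function it is overkill --- the plain sum $\sum_j T_j$ already has entries equal to the intersection size, hence is itself a nondeterministic tensor. Two small inaccuracies, neither a gap: the black-box citation should point to the disjointness lower bounds of Lee--Shraibman \cite{Lee2008} (and Chattopadhyay--Ada) rather than to \cite{David2009,Gavinsky2010}, which separate $NP^{cc}$ from $BPP^{cc}$ using different functions and a larger range of $k$; and ``$R_k^{NOF}(f)=n^{\Omega(1)}$'' is imprecise when $k=\omega(1)$, since the exponent $1/(k+1)$ vanishes --- the correct (and sufficient) statement is the one you give next, $\omega(polylog(n))$. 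Your cautionary example that a single cylinder intersection such as $[x_1=x_2]$ has $O(1)$ nondeterministic NOF cost but $nrank=2^n$ is sound and mirrors the paper's own $EQ_k$ discussion; it correctly identifies that what Theorem~\ref{the:bounds} rewards is a small \emph{box} cover, not small NOF nondeterminism.
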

\begin{proof}
To prove this we exhibit a function $f:X^k\to \{0,1\}$ such that $NQ_k^{NOF}(f)=O(\log n)$ and $R_{\epsilon,k}(f)=\Omega(n^{1/(k+1)}/(k2^{2^{k}}))$, where $R_\epsilon$ denotes the bounded-error NOF communication complexity with error probability upper-bounded by $\epsilon$. This will give the separation whenever $k=o(\log\log n)$.

In particular, we analyze the following total function. Let $x_1,\dots,x_k\in X$ with $X=\{0,1\}^n$, then
\begin{equation}\label{eq:function}
f(x_1,\dots,x_k)=\left\{\begin{array}{cl}
1	&\text{if }|x_1\land \cdots \land x_k|\neq 1\\
0	&\text{if }|x_1\land \cdots \land x_k|= 1
\end{array}\right.,
\end{equation}
where $\land$ denotes the bit-wise AND and $|x|$ is the Hamming weight of $x$. This function was previously studied by de Wolf \cite{DeWolf2003} in the 2-player case.
\vspace{0.2cm}

\noindent\emph{Upper Bound:} For each $i$ let $x_i=x_{ij_1}\dots x_{ij_n}$ and let $T_j$ be an order-$k$ tensor where $T_j[x_1,\dots,x_k]=1$ if $x_{1j}\land \cdots \land x_{kj}=1$ and $T_j[x_1,\dots,x_k]=0$ otherwise. Note that for each $j$ the tensor $T_j$ has rank 1. Define the order-$k$ tensor $T$ by
\[
T[x_1,\dots,x_k]=\sum_{j=1}^n T_j[x_1,\dots,x_k]-1.
\]
This tensor has rank $n$. Also $T$ is a nondeterministic communication tensor for $f$ since $T[x_1,\dots,x_k]=0$ if and only if $|x_1\land \cdots \land x_k|=1$. Hence, by Theorem \ref{the:bounds} the upper bound follows.
\vspace{0.2cm}

\noindent\emph{Lower Bound:}
To prove the lower bound we will use, without loss of generality, the sign version of Equation (\ref{eq:function}), i.e., 
\begin{equation}\label{eq:sign-function}
f(x_1,\dots,x_k)=\left\{\begin{array}{cl}
1	&\text{if }|x_1\land \cdots \land x_k|\neq 1\\
-1	&\text{if }|x_1\land \cdots \land x_k|= 1
\end{array}\right..
\end{equation}

We make use of a result by Lee and Shraibman \cite{Lee2008}. Let $\mu^\alpha$ be the \emph{approximate cylinder intersection norm} as defined in \cite{Lee2008}, and let $\widetilde{deg}(f)$ be the \emph{approximate degree} of a boolean function $f$ \cite{Nisan1992}.





\begin{lemma}\label{lem:deg-bound}
Let $f_n:\{0,1\}^n\to\{-1,1\}$ be a symmetric\footnote{A function is called symmetric if it only depends on the number of 1s in the input.} function, and let $F_f:(\{0,1\}^n)^k\to\{-1,1\}$ be a function (not necessarily symmetric) defined by $F_f(x_1,\dots,x_k)=f(x_1\land \dots \land x_k)$. Let $\alpha>1/(1-2\epsilon)$ and set $c=2e(k-1)2^{2^{k-1}}$, then
\[ R_{1/4,k}(F_{f_n})=\Omega(\log \mu^\alpha(F_{f_n}))=\Omega\left(\frac{\widetilde{deg}(f_m)}{2^{k}}\right), \]
where $n=(c/\widetilde{deg}(f_m))^{k-1}m^k$.
\end{lemma}

Note that Lemma \ref{lem:deg-bound} is a generalization of \cite[Corollary 6.1]{Lee2008} to symmetric functions. However, as pointed by the authors of \cite{Lee2008}, this generalization is straightforward and can be easily proved by following the proof of \cite[Corollary 6.1]{Lee2008}, and it is therefore omitted from this paper.

Define the following Hamming weight function:
\[
w(x)=\left\{\begin{array}{cl}
1	&\text{if } |x|\neq 1\\
-1	&\text{if } |x|=1
\end{array}\right..
\]
This way we can write Equation \ref{eq:sign-function} as $f(x_1,\dots,x_k)=w(x_1\land \dots \land x_k)$. Also note that $w$ is symmetric and we can apply Lemma \ref{lem:deg-bound}. Together with the characterization given by Paturi \cite{Paturi1992} of the approximate degree of symmetric functions we have that
\begin{equation}\label{eq:lower-bound}
\log \mu^\alpha(f)=\Omega\left(\frac{n^{1/(k+1)}}{k2^{2^{k}}} \right).
\end{equation}
\end{proof}

\begin{theorem}
For NOF communication we have that $NQP^{cc} \nsubseteq BQP^{cc}$ whenever the number of players $k=o(\log\log n)$.
\end{theorem}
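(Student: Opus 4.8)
The plan is to reuse the same total function $f$ from the proof of Theorem \ref{the:bpp-gap}, defined by Equation (\ref{eq:function}), and to keep the upper-bound argument intact. Since a nondeterministic communication tensor for $f$ of rank $n$ was already exhibited there, Theorem \ref{the:bounds} immediately gives $NQ_k^{NOF}(f)=O(\log n)$, so $f\in NQP^{cc}$. The only thing that must change relative to Theorem \ref{the:bpp-gap} is the lower-bound half: I now need to lower bound the bounded-error \emph{quantum} NOF communication complexity, which I will denote $Q_{\epsilon,k}^{NOF}(f)$, rather than the randomized complexity $R_{\epsilon,k}(f)$.

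The key observation is that the lower-bound machinery invoked in the proof of Theorem \ref{the:bpp-gap} is not specific to classical protocols. The approximate cylinder intersection norm $\mu^\alpha$ lower bounds quantum NOF communication as well, i.e., $Q_{\epsilon,k}^{NOF}(f)=\Omega(\log\mu^\alpha(f))$; this is exactly the content of the quantum lower-bound theorem of Lee, Schechtman, and Shraibman \cite{Lee2009}, whose quantum protocol model we already adopted in Lemma \ref{lem:final-state}. I would first invoke this fact to obtain a quantum analogue of Lemma \ref{lem:deg-bound}, namely $Q_{\epsilon,k}^{NOF}(F_{f_n})=\Omega(\log\mu^\alpha(F_{f_n}))=\Omega(\widetilde{deg}(f_m)/2^{k})$ under the same parameter relation $n=(c/\widetilde{deg}(f_m))^{k-1}m^k$.

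Next I would reuse verbatim the estimate already established for Theorem \ref{the:bpp-gap}. Writing $f(x_1,\dots,x_k)=w(x_1\land\cdots\land x_k)$ for the symmetric weight function $w$ and applying Paturi's \cite{Paturi1992} characterization of the approximate degree of symmetric functions gives, exactly as in Equation (\ref{eq:lower-bound}),
\[
\log\mu^\alpha(f)=\Omega\left(\frac{n^{1/(k+1)}}{k2^{2^{k}}}\right),
\]
and hence $Q_{\epsilon,k}^{NOF}(f)=\Omega(n^{1/(k+1)}/(k2^{2^{k}}))$. When $k=o(\log\log n)$ this quantity grows faster than any $polylog(n)$, so $f\notin BQP^{cc}$; combined with $f\in NQP^{cc}$, the separation follows.

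The main obstacle is the first step: justifying that $\mu^\alpha$ genuinely lower bounds \emph{quantum} NOF communication and not merely its randomized counterpart. Once that quantum lower-bound theorem is in hand, everything else is a repetition of the bounds computed for Theorem \ref{the:bpp-gap}, since the approximate-degree estimate and Paturi's bound are blind to whether the protocol is classical or quantum. The subtlety to check is that the generalized discrepancy / approximate-norm argument of \cite{Lee2009} applies in the bounded-error regime with the same constants, so that the parameter relation and the resulting exponent $1/(k+1)$ carry over unchanged.
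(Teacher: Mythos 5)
Your proposal follows the same route as the paper: keep the rank-$n$ nondeterministic tensor and the $NQ_k^{NOF}(f)=O(\log n)$ upper bound, reuse the $\mu^\alpha$ estimate of Equation (\ref{eq:lower-bound}), and transfer it to a bounded-error quantum lower bound. The one step you flag as ``the main obstacle'' is, however, stated imprecisely: the quantum lower-bound theorem of Lee, Schechtman, and Shraibman is \emph{not} that $Q_{\epsilon,k}(f)=\Omega(\log\mu^\alpha(f))$ with the same constants; their bound is in terms of the approximate quantum norm, $Q_{\epsilon,k}(T)=\Omega(\log\gamma^\alpha(T))$. The paper closes exactly this hole with a second lemma from the same reference, the Grothendieck-type equivalence $\gamma(T)\leq\mu(T)\leq C^k\gamma(T)$, which converts the $\mu^\alpha$ lower bound into a $\gamma^\alpha$ lower bound at the cost of a multiplicative $C^k$, i.e., an additive $O(k)$ after taking logarithms. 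This is precisely why the paper's final bound reads $\Omega\left(n^{1/(k+1)}/(k2^{2^{k}})-k\right)$ rather than your $\Omega\left(n^{1/(k+1)}/(k2^{2^{k}})\right)$: your assertion that the constants ``carry over unchanged'' is too optimistic. The damage is harmless for the theorem itself, since when $k=o(\log\log n)$ the additive $-k$ is swamped by the main term and the bound remains superpolylogarithmic, so your conclusion stands once you route the argument through $\gamma^\alpha$ and the norm equivalence instead of claiming a direct $\mu^\alpha$ quantum bound.
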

\begin{proof}
 To prove this we rely again in Equation (\ref{eq:sign-function}) and the fact that $NQ_k^{NOF}(f)=O(\log n)$. Here we show that $Q_{\epsilon,k}(f)=\Omega(n^{1/(k+1)}/(k2^{2^{k}})-k)$,  where $Q_\epsilon$ denotes the bounded-error NOF communication complexity with error probability upper-bounded by $\epsilon$.

Note that to prove Theorem \ref{the:bpp-gap} we derived a lower bound on $\mu^\alpha$. We can use the same lower bound to prove the separation for $BQP^{cc}$. In order to do that we make use of the following two results by Lee, Schechtman, and Shraibman \cite{Lee2009}. Let $\gamma^\alpha$ be the \emph{approximate quantum norm} as defined in \cite{Lee2009}.
\begin{lemma}
Let $T$ be an order-$k$ sign-tensor, then $Q_{\epsilon,k}(T)=\Omega(\log \gamma^\alpha(T))$.
\end{lemma}
\begin{lemma}
For every order-$k$ tensor $T$, $\gamma(T) \leq \mu(T) \leq C^k \gamma(T)$, for some absolute constant $C$.
\end{lemma}

Thus, by these two lemmas above and Equation \ref{eq:lower-bound} we have that
\[
\log \gamma^\alpha(f)=\Omega\left(\frac{n^{1/(k+1)}}{k2^{2^{k}}} -k\right).
\]
\end{proof}

\section{Concluding Remarks}
In this paper we studied strong quantum nondeterministic communication complexity in multiparty protocols. In particular, we showed that i) strong quantum nondeterministic NOF communication complexity is upper-bounded by the logarithm of the rank of the nondeterministic communication tensor; ii) strong quantum nondeterministic NIH communication complexity is lower-bounded by the logarithm of the rank of the nondeterministic communication tensor. These results naturally generalizes previous work by de Wolf\cite{DeWolf2003}. Moreover, the lower bound on NIH is also a lower bound for quantum exact NIH communication. This fact was used to show a $\Omega(n+\log k)$ lower bound for the generalized inner product function.

We also showed that $NQP^{cc}\nsubseteq BPP^{cc}$ and $NQP^{cc}\nsubseteq BQP^{cc}$ when the number of players is $o(\log\log n)$. It remains as an open problem to prove the same separations with an increased number of players.

In order to prove strong lower bounds using tensor-rank in NIH, we need stronger construction techniques for tensors. The fact that computing tensor-rank is $NP$-complete suggests that this could be a very difficult task. Alternatives for finding lower bounds on tensor-rank include computing the norm of the communication tensor, or a hardness result for approximating tensor-rank.

\section*{Acknowledgements}
The authors thank the anonymous reviewers from TAMC'12 for initial reviews of this paper. The first author thanks the NEC C\&C Foundation for partially supporting this research.

\bibliographystyle{alpha}
\bibliography{../../library}

\appendix

\section{Proofs of Technical Lemmas}\label{app:lemmas}

\subsection{Proof of Lemma \ref{lem:existence}}

If $f(y,z)=0$ then $v(y,z)=$ for all $\alpha_u,\beta_v$. If $f(y,z)\neq 0$ there exists $(u',v')$ such that $v(y,z)\neq 0$. Here we use the same arguments given by \cite{DeWolf2003}, i.e., we show that $v(y,z)=0$ happens with small probability. In fact, having families of vectors with different dimensions does not affect the argument. Consider the situation where all $\alpha_u$ and $\beta_v$ were chosen except $\alpha_{u'}$ and $\beta_{v'}$. Write $v(y,z)$ in terms of these two coefficients
\[
v(y,z)=c_0\alpha_{u'}\beta_{v'}+c_1\alpha_{u'}+c_2\beta_{v'}+c_3,
\]
where $c_0=\sum_{i=1}^r A_i(y)_{u'} B_i(z)_{v'}\neq 0$. If we fix $\alpha_{u'}$ then, $v(y,z)$ is a linear equation with at most one zero for each $\alpha_{u'}$. Therefore, we have at most $2^{2n+1}+2^{2n+1}-1=2^{2n+2}-1$ ways of choosing $\alpha_{u'}$ and $\beta_{v'}$ such that $v(y,z)=0$. Thus
\[
Pr[v(y,z)=0]\leq \frac{2^{2n+1}}{(2^{2n+1})^2}< \frac{ 2^{2n+2}}{(2^{2n+1})^2}=2^{-2n}.
\]
By the union bound
\begin{IEEEeqnarray*}{rCl}
\IEEEeqnarraymulticol{3}{l}{Pr[\exists(y,z)\in f^{-1}(1) \text{ s.t. } v(y,z)=0]}\\
 \quad\quad\quad&\leq& \sum_{(y,z)\in f^{-1}(1)} Pr[v(y,z)=0]<2^{2n}\cdot 2^{-2n}=1.
\end{IEEEeqnarray*}
The following is a probabilistic method argument. Since the above probability is strictly less than 1, there exists sets $\{a_1(y),\dots,a_r(y)\}$ and $\{b_1(z),\dots,b_r(z)\}$ such that for every $(y,z)\in f^{-1}(1)$ we have $v(y,z)\neq0$.

\subsection{Proof of Lemma \ref{lem:tensor}}\label{sec:tensor}
Let $T=\sum_{i=1}^r \ket{v_1^i} \cdots \ket{v_k^i}$ for some family of $d$-dimensional vectors.  Define the tensor $T'=\sum_{i=1}^{r} \ket{v_1^i} \cdots  \ket{v_k^i} \ket{v_{k+1}^i}$ where each $\ket{v_{k+1}^i}$ is the all-1 vector. Thus, component-wise we have that
\[T[x_1,\dots, x_k]=\sum_{i=1}^r v_1^i(x_1)\cdots v_k^i(x_k),\]
and
\[T'[x_1,\dots, x_kx_{k+1}]=\sum_{i=1}^{r} v_1^i(x_1)\cdots v_k^i(x_k)v_{k+1}^i(x_{k+1}),\]
where $v_{k+1}^i(x_{k+1})=1$ for all $i$ and for all inputs $x_{k+1}$. Then $T'[x_1,\dots, x_kx_{k+1}]=\sum_{i=1}^{r} v_1^i(x_1)\cdots v_k^i(x_k)$ and $T'[x_1,\dots, x_k x_{k+1}]=T[x_1,\dots, x_k]$ for any $x_{k+1}$.

\end{document}